\documentclass[conference]{IEEEtran}

\IEEEoverridecommandlockouts

\usepackage[T1]{fontenc}
\usepackage{cite}
\usepackage{hyperref}
\usepackage{amsmath,amssymb,amsfonts,amsthm}
\usepackage{algorithmic}
\usepackage{graphicx}
\usepackage{textcomp}
\usepackage{xcolor}
\usepackage{braket}
\usepackage{lipsum}
\usepackage{tikz}
\usepackage{quantikz}
\usepackage{float}
\usepackage{subcaption}
\usepackage{array}
\usepackage{booktabs}
\usepackage{algorithm2e}
\usepackage{xfrac}
\usepackage{mdframed}
\usepackage{adjustbox} 
\usepackage{amsmath,amssymb,amsfonts}
\usepackage{braket}
\usepackage{tabularx}

\RestyleAlgo{ruled}
\SetKwComment{Comment}{/* }{ */}

\usetikzlibrary{positioning}

\def\BibTeX{{\rm B\kern-.05em{\sc i\kern-.025em b}\kern-.08em
    T\kern-.1667em\lower.7ex\hbox{E}\kern-.125emX}}

\newtheorem{lemma}{Lemma}
\newtheorem*{remark}{Remark}
\newtheorem{definition}{Definition}
\newtheorem{prop}{Proposition}

\newtheorem{cor*}{Corollary}

\newtheorem*{resm}{Monolithic EC Fabric}
\newtheorem*{resmd}{Modular EC Fabric}
\newcommand{\eqdef}{\stackrel{\triangle}{=}}
\newtheorem{des}{Design Principle}
\begin{document}

\title{Quantum Routers: A Switching-Fabric Framework for Quantum-Native Forwarding}

\author{ Jessica Illiano~\IEEEmembership{Member,~IEEE}, Caterina De Risi, \\Angela Sara Cacciapuoti~\IEEEmembership{Senior~Member,~IEEE}, and Marcello Caleffi~\IEEEmembership{Senior~Member,~IEEE} 
    \thanks{The authors are with the \href{www.quantuminternet.it}{www.QuantumInternet.it} research group, University of Naples Federico II, Naples, 80125 Italy.}
    \thanks{This work has been funded by the European Union under Horizon Europe ERC-CoG grant QNattyNet, n.101169850. Views and opinions expressed are however those of the author(s) only and do not necessarily reflect those of the European Union or the European Research Council Executive Agency. Neither the European Union nor the granting authority can be held responsible for them. A preliminary conference version of this work has been accepted in the Proceedings of IEEE ICC’26 \cite{IllDerCal-26}}
}
\maketitle

\begin{abstract}
Forwarding in quantum networks cannot be realized by directly transposing classical switching fabrics, since the no-cloning theorem and the quantum measurement postulate constrain the direct relay of quantum information while ruling out copy-based buffering and inspection. In this paper, we propose a \textit{switching-fabric framework} for quantum routers based on multipartite entanglement. Specifically, we formalize the notion of an \textit{entanglement-based switching fabric}, in which a graph state acts as the forwarding resource and entanglement forwarding is realized through local Pauli measurements. We translate the classical notions of blocking and non-blocking operation into structural conditions for entanglement-based fabrics, by deriving the \textit{edge-controlled (EC) design principle} for non-blocking operation. We instantiate this principle through a monolithic \textit{EC crossbar} and a modular \textit{Clos-type EC fabric}, for which we characterize resource scaling and identify the regime where the modular design becomes more resource-efficient than the monolithic one. Finally, a forwarding-latency analysis establishes a fundamental distinction between \textit{matching-oblivious} and \textit{matching-driven} forwarding: the proposed EC fabrics realize all requested input-output entanglement links with constant forwarding depth under sufficient measurement parallelism, whereas matching-driven EPR-based fabrics exhibit latency that scales with the number of requested connections. The proposed framework provides a hardware-agnostic foundation for quantum-router switching fabrics.
\end{abstract}

\begin{IEEEkeywords}
Multipartite entanglement, quantum router, switching fabric, quantum forwarding, ERC-CoG QNattyNet.
\end{IEEEkeywords}

%-----------------------
%-----Sec.I-------------
%-----------------------
\section{Introduction}
\label{sec:1}

At the heart of a classical router lies the switching fabric, i.e., the internal component responsible for forwarding packets from input-ports to output-ports according to routing decisions. Such fabrics, including crossbar and Clos-type architectures, are designed to support multiple input-output connections in parallel and, when non-blocking, to allow any idle input-port to be connected to any idle output-port.

The above cannot be directly transposed to the quantum domain, since the no-cloning theorem and the quantum measurement postulate constrain the direct relay of quantum information while ruling out copy-based buffering and inspection. Hence, the quantum counterpart of packet forwarding cannot be based on moving, copying, and inspecting information carriers across intermediate nodes. Instead, the Quantum Internet is built on entanglement as a fundamental network resource: the network does not forward information itself, but distributes, maintains, and manipulates entangled resources that can later be consumed by quantum applications~\cite{CalCac-25,CacCal-26}. In this sense, entanglement qubits (ebits) play the role of network-level delivery units, and forwarding becomes the operation of steering entanglement resources across the network rather than relaying bit-packets or unknown quantum states hop by hop~\cite{CalCac-25,CacCalIll-25,AbaCubMai-25}. Following~\cite{CalCac-25}, we refer to this paradigm as \textit{entanglement forwarding}. Building on these considerations, in this paper we design a \textit{switching-fabric framework} for quantum routers. Specifically, we investigate how multipartite entanglement states can be engineered to function as internal forwarding resources, enabling entanglement forwarding between input- and output-ports through local Pauli measurements, independently of any specific physical implementation. The proposed approach fundamentally differs from hardware-driven photonic or optomechanical routers: rather than relying on physical wiring, it activates entanglement links adaptively through measurements. Our contributions are as follows:
\begin{itemize}
    \item We formalize the notion of an \textit{entanglement-based switching fabric} for quantum routers, in which entanglement forwarding is realized through local Pauli measurements on a multipartite entanglement resource.

    \item We translate the classical notions of blocking and non-blocking operation into structural conditions for entanglement-based switching fabrics, and derive the design principle required to achieve a \textit{non-blocking} quantum fabric, namely the \textit{edge-controlled (EC) design principle}.

    \item We instantiate the EC design through a monolithic \textit{EC crossbar} and a modular \textit{Clos-type EC fabric}. For both architectures, we derive the resource scaling and we identify the threshold above which the modular architecture becomes more resource-efficient than the monolithic one.

    \item We provide a forwarding-latency and resource comparison between the proposed EC fabrics and other relevant architectures based on EPR-resources. The analysis identifies two forwarding regimes: \textit{matching-oblivious forwarding}, where the resource is provisioned to support any admissible matching before the matching is assigned; and \textit{matching-driven forwarding}, where the requested matching determines the forwarding operations to be performed, thereby introducing matching-dependent latency.
\end{itemize}

The proposed framework provides a hardware-agnostic foundation for quantum-router switching fabrics.
%---------
\subsection{Related Work}
\label{sec:1.1}

In quantum networks, several classes of devices have been proposed to support node interconnection and entanglement distribution. However, since quantum-network architectures are still at an early stage of conceptual development, consistent terminology and clear functional definitions for the quantum counterparts of classical devices, such as repeaters, switches, and routers, are still lacking. Quantum repeaters were originally conceived to extend the range of entanglement distribution through entanglement swapping, with later generations including entanglement purification and quantum error correction~\cite{MurLiKim-16}. Quantum switches generalize this functionality to multi-user settings, whereas quantum routers extend it to multi-hop scenarios, enabling end-to-end entanglement distribution among distant nodes~\cite{AzuTamLo-15,BenHajVan-23,Alvaro-24,LemCer-13,ShiMan-23,Promponas-23,gauthier-23}.

Most of these works follow a bipartite-resource paradigm: routers, switches, and repeaters are modeled as intermediate devices supporting end-to-end entanglement distribution by generating or manipulating EPR pairs through BSMs. Their operational mode is thus constrained by point-to-point entanglement availability, quantum memory capacity, fidelity, and entanglement-generation rate. Consequently, most investigations analyze these devices as components of larger networked systems, with the device itself abstracted as a functional entity executing BSM operations or scheduling EPR generation according to a routing metric or optimization algorithms. Contributions that explicitly examine device-level functionality are instead primarily hardware-oriented~\cite{AzaMohPol-24,KuaJin-19,CumJuaPue-25}.

Beyond bipartite-resource architectures, another line of research investigates pre-generated multipartite entanglement as a resource for network connectivity and functionalities \cite{MazPelCac-26}. In~\cite{AzuTamLo-15}, multipartite entanglement is used to embed purification within a repeater protocol. In~\cite{MazCalCac-24-1}, multipartite entanglement is exploited to engineer on-demand entanglement-overlay graphs within a Quantum Local Area Network (QLAN), where local Pauli measurements at a central orchestrator establish entanglement links among client nodes beyond the physical graph constraints. More recently, \cite{RamMorDur-26} proposes a distributed entanglement-based architecture in which Bell pairs arranged according to a hypercube structure enable universal pairwise connectivity via entanglement swapping along edge-disjoint paths, achieving $O(n\log n)$ total resource scaling. 

These works demonstrate the potential of pre-generated entangled resources for network connectivity either with a centralized approach \cite{MazCalCac-24-1}, or in a distributed manner \cite{RamMorDur-26}. Differently, our proposal focuses on a multipartite entangled resource localized within a single router node and used as its internal switching fabric. This device-level perspective raises design questions that remain unaddressed in the existing literature: what makes an entangled resource a valid quantum switching fabric, when such a fabric is blocking or non-blocking, and how modular non-blocking fabrics can be constructed.
In this paper, we adopt the term \textit{quantum router} to emphasize the directional input-output forwarding functionality of the proposed device, in analogy with classical routers. This terminology also reflects the intended role of the proposed fabric within network-level entanglement forwarding, rather than only local-area interconnection.

%---------------------
%-----Sec. II---------
%---------------------
\section{Background}
\label{sec:2}

Here, we outline the preliminaries that are used throughout the remainder of the paper. 

Graph states constitute a versatile class of multipartite entangled states for quantum networking~\cite{PirDur-21, FrePirDur-24,CheIllCac-24,MazCalCac-24}. An $n$-qubit graph state has the useful property of admitting a one-to-one correspondence with a graph representation~\cite{HeiDurEis-06}. Accordingly, a graph state $\ket{G}$ is univocally associated with a graph $G=(V,E)$, where each vertex $a\in V$, with $|V|=n$, represents a qubit, and each edge $\{a,b\}\in E$ denotes a controlled-$Z$ operation between qubits $a$ and $b$.
Indeed, owing to this one-to-one mapping, single-qubit Pauli measurements, $\{\sigma_{\chi}\}$ with $\chi \in \{x,y,z\}$, performed on the qubits of a graph state can be translated into the corresponding graph-level operations. The post-measurement graph is obtained -- up to local unitaries -- by applying the graph transformations as described in~\cite{HeiEisBri-04,HeiDurEis-06}, namely, vertex deletion for $\sigma_z$, local complementation for $\sigma_y$, and nested local complementations for $\sigma_x$ measurements. The detailed mathematical formulations and derivations can be found in~\cite{HeiDurEis-06,MazCalCac-24,CheIllCac-24}.

\begin{definition}[Measurement strategy]
\label{def:01}
Let $A$ be a finite index set of qubits in $V$. 
A measurement strategy $\gamma(\cdot)$ is a function that assigns to each qubit $i\in A$ a label $\chi_i \in\{x,y,z\}$, indicating which single-qubit Pauli measurement, $\sigma^{i}_{\chi_i}$ must be performed on $i$: 
\begin{equation}  
\label{eq:01}
\gamma(i)=\sigma_{\chi_i}^{(i)},  \quad \forall i \in A,  \, \textrm{with}    \,\chi_i \in \{x,y,z\}.
\end{equation}
\end{definition}
Accordingly, given a strategy $\gamma(\cdot)$, $\mathcal{P}^{A}_{\gamma}\; \eqdef\; \bigotimes_{i\in A}\, \sigma_{\chi_i}^{(i)}$ denotes the corresponding product of Pauli-measurement choices acting on $A$.
If the global system is defined on a larger set of qubits, $V\supseteq A$, we use the natural embedding so that $\mathcal{P}^{V}_{\gamma}$ leaves all qubits in $V\setminus A$ unmeasured\footnote{In~\eqref{eq:02}, we used the common notation abuse of omitting the explicit tensor ordering for readability. The ordering in the tensor product remains nevertheless meaningful.}: 
\begin{equation}
\label{eq:02}
\mathcal{P}^{V}_{\gamma}\;\; \eqdef\;\; 
\left( \bigotimes_{i\in A}\, \sigma_{\chi_i}^{(i)} \right)
\;\otimes\;
\left( \bigotimes_{j\in V\setminus A} \! I^{(j)} \right).
\end{equation}
%--------
%--------
\begin{definition}[Achievability]
\label{def:02}
Let $\ket{G}$ be a graph state associated with the graph $G=(V,E)$, and let $G'=(V',E')$ denote the targeted graph specified by a desired connection pattern. 
We say that $G'$ is achievable through a measurement strategy $\gamma(\cdot)$ if there exists a subset of qubits $A\subseteq V$ such that: 
\begin{equation}
\label{eq:03}
\mathcal{P}^V_{\gamma}\ket{G} \;\overset{\text{LU}}{=}\; 
    \ket{G'} \otimes \ket{\phi_{\mathrm{idle}}},
\end{equation}
where $\mathcal{P}^V_{\gamma}$ is defined in~\eqref{eq:02}, $\overset{\mathrm{LU}}{=}$ denotes the equivalence up to local unitaries, $\ket{G'}$ is the post-measurement state corresponding to the graph $G'$, and $\ket{\phi_{\mathrm{idle}}}$ denotes the residual state on the subsystem of qubits $V \setminus (V' \cup A)$.
\end{definition}
\noindent According to Def.~\ref{def:02}, a target state $\ket{G'}$ is decoupled from the residual state defined on the subsystem of qubits $V \setminus (V' \cup A)$ via the measurement strategy. 
The vertices in $A$ on which the measurement strategy $\gamma$ acts through Pauli measurements are removed from the original graph. Hence, each achievable graph $G'$ can be associated with a specific \textit{depleted set} of vertices, those measured to transform the initial graph into the desired target graph $G'$. This concept is formalized as follows.
\begin{definition}[Depleted set]
\label{def:03}
Given a graph state $\ket{G}$ associated with a graph $G=(V,E)$ and a measurement strategy $\gamma(\cdot)$, the \textit{depleted set} $\Gamma(G')\subseteq V$ corresponding to an achievable graph $G'$ is defined as $\Gamma(G')\eqdef \{\, i \in V \;|\; \gamma(i)= \sigma^{(i)}_{\chi_i}, \text{\rm with}\,\, \chi_i \in \{x,y,z\}\,\}$. The set $\Gamma(G')$ thus contains the vertices measured under $\gamma(\cdot)$ to obtain the connection pattern described by $G'$. 
\end{definition}

%------------------------------------
%------Sec. III ---------------------
%------------------------------------
\section{Monolithic Entanglement-based Switching Fabric Design}
\label{sec:3}
As discussed in Sec.~\ref{sec:1}, the objective of a quantum router is to dynamically interconnect any set of disjoint pairs of \textit{input-port qubit} and \textit{output-port qubit}, thereby realizing the forwarding functionality in the quantum domain. We note that we adopted the classical router terminology for the sake of clarity.
To this aim, a graph state $\ket{G}$ is used as the fundamental resource underpinning the operation of the switching fabric. Accordingly, we define $I=\{i_1,\dots,i_N\}$ and $O=\{o_1,\dots,o_N\}$ as the sets of input- and output-port qubits, respectively, which are disjoint subsets of the vertex set, i.e., $I \cup O = V$ and $I\cap O=\emptyset$. With this in mind, the achievable graph $G'$ introduced in Def.~\ref{def:02} corresponds to the graph obtained by properly manipulating the original graph $G$, to activate a set of the aforementioned disjoint interconnections between input- and output-port qubits, as formally captured by the following definition.
%-----
\begin{definition}[Matching Set]
\label{def:04}
    Let $\ket{G}$ be a graph state associated with a graph $G=(V,E)$, where $V=I \cup O$ is partitioned into two disjoint subsets of port qubits: the input-ports $I={i_1,\dots,i_N}$ and the output-ports $O={o_1,\dots,o_N}$. The \textit{matching set} $E' \subset I \times O$, with $|E'|\leq N$ represents the desired pattern of entanglement links established between disjoint input- and output-port pairs, as a result of a measurement strategy $\gamma(\cdot)$ applied to the resource state $\ket{G}$, i.e.:
\begin{equation}
    \label{eq:05}
    \ket{G'}\overset{\mathrm{LU}}{=}\bigotimes_{(i,o)\in E'}\ket{\Phi_{io}},
\end{equation}
%where $\overset{\mathrm{LU}}{=}$ denotes the equivalence up to local unitaries and 
where $\ket{\Phi_{io}}$ is a Bell state shared between $i$ and $o$. The corresponding achieved graph $G'=(V',E')$ is:
\begin{align}
\label{eq:04}
G'&=(V',E'), \, \text{\rm with}\\&\nonumber
    V' =\{i,o \;|\; (i,o) \in E'\}\subseteq I \cup O,\\&\nonumber
    E'=\{(i_i,o_j),(i_k,o_r) \;|\;i_i,i_k\in I \wedge o_j,o_r \in O,\\&\nonumber  \text{\rm with} \, \, i_i\neq i_k, o_j \neq o_r \}.
\end{align}
\end{definition}

According to Def.~\ref{def:04}, the edges in $E'$ specify the dynamic connection pattern, or
\textit{forwarding configuration}, realized by the quantum router in a single operational round. The vertices in $V'$ correspond to the \textit{active ports} participating in the current matching, while all remaining port qubits in $(I \cup O)\setminus V'$ are referred to as \textit{idle ports}.
 In the following, for notational simplicity and with a slight notation abuse, we will use the symbols $G'$, $V'$, and $E'$ interchangeably when referring to the achieved configuration. In particular, we introduce the unified symbol $\mathcal{M}$ to denote either $G'$, $V'$, or $E'$, with the intended meaning clear from the context.

\begin{definition}[$N$-Crossbar resource]
    \label{def:05}
    Let $\ket{G}$ be the graph state associated with the graph $G=(V,E)$, whose vertex set is partitioned into the input- and output-ports, i.e., $V=I\cup O$, with $I\cap O=\emptyset$ and $|I|=|O|=N$. Then $\ket{G}$ is a valid $N$-crossbar resource if, for every matching $\mathcal{M}\subset I \times O$ as defined in Def.~\ref{def:04}, with $|\mathcal{M}|\leq N$, there exists a measurement strategy $\gamma(\cdot)$ acting on the qubits in $V$ such that:
    \begin{equation} 
    \label{eq:06}
        \mathcal{P}^V_\gamma\ket{G} \;\overset{\mathrm{LU}}{=}\; 
      \ket{G'}\ \otimes\ \ket{\phi_{\text{idle}}},
    \end{equation}
    where $G'=(V',E')$ is the achieved graph associated with the requested matching configuration $\mathcal{M}$, so that $\ket{G'}$ consists of the disjoint input-output Bell pairs as in \eqref{eq:05}. $\ket{\phi_{\mathrm{idle}}}$ denotes the residual state on the subsystem $V\setminus (V'\cup \Gamma(G'))$, with $\Gamma(G')$ the depleted set induced by $\gamma(\cdot)$ as defined in Def.~\ref{def:03}.  
\end{definition}

As represented in the leftmost part of Fig.~\ref{fig:01}, the simplest example of a crossbar resource is a 3-qubit GHZ state, where $i_1$ acts as an input qubit and $o_1$, $o_2$ as output qubits. The input-output connection between the pair $(i_1,o_j)$ is realized through a Pauli measurement on the remaining output-port qubit $o_r, r\neq j$. In principle, this elementary resource could be scaled, as represented in the rightmost part of Fig.~\ref{fig:01}, where the sets of input- and output-port qubits are interconnected through a graph state associated with a complete bipartite graph. However, this construction can realize only single-pair configurations, i.e., matchings with $|\mathcal{M}|=1$, under any admissible measurement strategy within the considered Pauli-measurement model. Consequently, GHZ-equivalent resources do not provide a scalable switching fabric, as they cannot support simultaneous independent input-output connections.

To design scalable entanglement-based switching fabrics, it is therefore necessary to reinterpret the notion of blocking and non-blocking operational modes from classical switching theory in the quantum domain, where forwarding is realized through local operations and measurements rather than wired electronic components. Classically, a switching fabric with $N$ inputs and $N$ outputs is said to be \textit{non-blocking} if any idle input can always be connected to any idle output\footnote{Different formulations of the non-blocking property exist \cite{Clos-53,Ben-62,Kol-98}. We adopted the definition introduced in the early work on switching networks \cite{Clos-53}.}, where an idle input or output is a port not currently engaged in any active connection within the fabric. With this in mind, we formulate Lemma~\ref{lem:01}. 
%--------
\begin{figure}
    \centering
    \resizebox{0.85\columnwidth}{!}{
    \input{Fig-F}
    }
    \caption{ \footnotesize Graph representation of a GHZ-equivalent port-only resource, as an entanglement-based switching fabric. Pauli measurements on port qubits can realize only one input-output connection per operational round.}
    \hrulefill
    \label{fig:01}
\end{figure}
\begin{figure}
\centering
    \resizebox{0.6\columnwidth}{!}{
    \tikzset{every picture/.style={line width=0.75pt}} %set default line width to 0.75pt        

\begin{tikzpicture}[x=0.75pt,y=0.75pt,yscale=-1,xscale=1]
%uncomment if require: \path (0,455); %set diagram left start at 0, and has height of 455

%Shape: Circle [id:dp4006638057169051] 
\draw  [color={rgb, 255:red, 146; green, 23; blue, 125 }  ,draw opacity=1 ][fill={rgb, 255:red, 146; green, 23; blue, 125 }  ,fill opacity=1 ] (408.85,113.64) .. controls (408.82,121.84) and (402.15,128.47) .. (393.94,128.44) .. controls (385.74,128.41) and (379.11,121.74) .. (379.14,113.53) .. controls (379.17,105.33) and (385.85,98.7) .. (394.05,98.73) .. controls (402.26,98.76) and (408.88,105.44) .. (408.85,113.64) -- cycle ;
%Shape: Circle [id:dp2731331602898256] 
\draw  [color={rgb, 255:red, 16; green, 73; blue, 140 }  ,draw opacity=1 ][fill={rgb, 255:red, 16; green, 73; blue, 140 }  ,fill opacity=1 ] (379.2,73.74) .. controls (379.2,65.53) and (385.85,58.88) .. (394.05,58.88) .. controls (402.25,58.88) and (408.9,65.53) .. (408.91,73.73) .. controls (408.91,81.93) and (402.26,88.58) .. (394.06,88.59) .. controls (385.85,88.59) and (379.2,81.94) .. (379.2,73.74) -- cycle ;
%Shape: Circle [id:dp3467256136776755] 
\draw  [color={rgb, 255:red, 128; green, 128; blue, 128 }  ,draw opacity=1 ][fill={rgb, 255:red, 128; green, 128; blue, 128 }  ,fill opacity=1 ] (190.69,113.59) .. controls (190.66,121.79) and (183.99,128.42) .. (175.78,128.39) .. controls (167.58,128.35) and (160.95,121.68) .. (160.98,113.48) .. controls (161.01,105.27) and (167.69,98.65) .. (175.89,98.68) .. controls (184.1,98.71) and (190.72,105.38) .. (190.69,113.59) -- cycle ;
%Shape: Circle [id:dp1211509143998909] 
\draw  [color={rgb, 255:red, 128; green, 128; blue, 128 }  ,draw opacity=1 ][fill={rgb, 255:red, 128; green, 128; blue, 128 }  ,fill opacity=1 ] (249.69,113.59) .. controls (249.66,121.79) and (242.99,128.42) .. (234.78,128.39) .. controls (226.58,128.35) and (219.95,121.68) .. (219.98,113.48) .. controls (220.01,105.27) and (226.69,98.65) .. (234.89,98.68) .. controls (243.1,98.71) and (249.72,105.38) .. (249.69,113.59) -- cycle ;
%Straight Lines [id:da117599859790543] 
\draw [color={rgb, 255:red, 128; green, 128; blue, 128 }  ,draw opacity=1 ][fill={rgb, 255:red, 128; green, 128; blue, 128 }  ,fill opacity=1 ][line width=1.5]    (175.84,113.53) -- (127.11,173.45) ;
%Straight Lines [id:da7789273384891272] 
\draw [color={rgb, 255:red, 128; green, 128; blue, 128 }  ,draw opacity=1 ][fill={rgb, 255:red, 128; green, 128; blue, 128 }  ,fill opacity=1 ][line width=1.5]    (175.84,113.53) -- (127.11,134.45) ;
%Straight Lines [id:da9817147481165613] 
\draw [color={rgb, 255:red, 128; green, 128; blue, 128 }  ,draw opacity=1 ][fill={rgb, 255:red, 128; green, 128; blue, 128 }  ,fill opacity=1 ][line width=1.5]    (175.84,113.53) -- (127.11,93.45) ;
%Straight Lines [id:da3550945038271257] 
\draw [color={rgb, 255:red, 128; green, 128; blue, 128 }  ,draw opacity=1 ][fill={rgb, 255:red, 128; green, 128; blue, 128 }  ,fill opacity=1 ][line width=1.5]    (175.84,113.53) -- (127.11,54.45) ;
%Straight Lines [id:da8306249996140778] 
\draw [color={rgb, 255:red, 128; green, 128; blue, 128 }  ,draw opacity=1 ][fill={rgb, 255:red, 128; green, 128; blue, 128 }  ,fill opacity=1 ][line width=1.5]    (283.98,173.48) -- (234.84,113.53) ;
%Straight Lines [id:da7332712022620017] 
\draw [color={rgb, 255:red, 128; green, 128; blue, 128 }  ,draw opacity=1 ][fill={rgb, 255:red, 128; green, 128; blue, 128 }  ,fill opacity=1 ][line width=1.5]    (284.13,134.48) -- (234.84,113.53) ;
%Straight Lines [id:da586505833685375] 
\draw [color={rgb, 255:red, 128; green, 128; blue, 128 }  ,draw opacity=1 ][fill={rgb, 255:red, 128; green, 128; blue, 128 }  ,fill opacity=1 ][line width=1.5]    (284.29,93.48) -- (234.84,113.53) ;
%Straight Lines [id:da9849264401931447] 
\draw [color={rgb, 255:red, 128; green, 128; blue, 128 }  ,draw opacity=1 ][fill={rgb, 255:red, 128; green, 128; blue, 128 }  ,fill opacity=1 ][line width=1.5]    (284.44,54.48) -- (234.84,113.53) ;
%Straight Lines [id:da8011649032961451] 
\draw [color={rgb, 255:red, 128; green, 128; blue, 128 }  ,draw opacity=1 ][fill={rgb, 255:red, 128; green, 128; blue, 128 }  ,fill opacity=1 ][line width=1.5]    (219.98,113.48) -- (190.69,113.59) ;
%Shape: Circle [id:dp9569043567774285] 
\draw  [color={rgb, 255:red, 128; green, 128; blue, 128 }  ,draw opacity=1 ][fill={rgb, 255:red, 128; green, 128; blue, 128 }  ,fill opacity=1 ] (408.85,153.64) .. controls (408.82,161.84) and (402.15,168.47) .. (393.94,168.44) .. controls (385.74,168.41) and (379.11,161.74) .. (379.14,153.53) .. controls (379.17,145.33) and (385.85,138.7) .. (394.05,138.73) .. controls (402.26,138.76) and (408.88,145.44) .. (408.85,153.64) -- cycle ;
%Shape: Circle [id:dp48259057918474535] 
\draw  [color={rgb, 255:red, 146; green, 23; blue, 125 }  ,draw opacity=1 ][fill={rgb, 255:red, 146; green, 23; blue, 125 }  ,fill opacity=1 ] (313.69,173.59) .. controls (313.66,181.79) and (306.99,188.42) .. (298.78,188.39) .. controls (290.58,188.35) and (283.95,181.68) .. (283.98,173.48) .. controls (284.01,165.27) and (290.69,158.65) .. (298.89,158.68) .. controls (307.1,158.71) and (313.72,165.38) .. (313.69,173.59) -- cycle ;
%Shape: Circle [id:dp5424459554130382] 
\draw  [color={rgb, 255:red, 146; green, 23; blue, 125 }  ,draw opacity=1 ][fill={rgb, 255:red, 146; green, 23; blue, 125 }  ,fill opacity=1 ] (313.84,134.59) .. controls (313.81,142.79) and (307.14,149.42) .. (298.93,149.39) .. controls (290.73,149.36) and (284.1,142.68) .. (284.13,134.48) .. controls (284.16,126.27) and (290.84,119.65) .. (299.04,119.68) .. controls (307.24,119.71) and (313.87,126.38) .. (313.84,134.59) -- cycle ;
%Shape: Circle [id:dp9153352468347117] 
\draw  [color={rgb, 255:red, 146; green, 23; blue, 125 }  ,draw opacity=1 ][fill={rgb, 255:red, 146; green, 23; blue, 125 }  ,fill opacity=1 ] (314,93.59) .. controls (313.97,101.79) and (307.29,108.42) .. (299.09,108.39) .. controls (290.89,108.36) and (284.26,101.68) .. (284.29,93.48) .. controls (284.32,85.27) and (291,78.65) .. (299.2,78.68) .. controls (307.4,78.71) and (314.03,85.38) .. (314,93.59) -- cycle ;
%Shape: Circle [id:dp2768413139295196] 
\draw  [color={rgb, 255:red, 146; green, 23; blue, 125 }  ,draw opacity=1 ][fill={rgb, 255:red, 146; green, 23; blue, 125 }  ,fill opacity=1 ] (314.15,54.59) .. controls (314.12,62.79) and (307.44,69.42) .. (299.24,69.39) .. controls (291.04,69.36) and (284.41,62.68) .. (284.44,54.48) .. controls (284.47,46.27) and (291.15,39.65) .. (299.35,39.68) .. controls (307.55,39.71) and (314.18,46.38) .. (314.15,54.59) -- cycle ;
%Shape: Circle [id:dp39721861658296587] 
\draw  [color={rgb, 255:red, 16; green, 73; blue, 140 }  ,draw opacity=1 ][fill={rgb, 255:red, 16; green, 73; blue, 140 }  ,fill opacity=1 ] (97.4,54.46) .. controls (97.4,46.26) and (104.05,39.61) .. (112.26,39.6) .. controls (120.46,39.6) and (127.11,46.25) .. (127.11,54.45) .. controls (127.11,62.66) and (120.46,69.31) .. (112.26,69.31) .. controls (104.06,69.31) and (97.41,62.66) .. (97.4,54.46) -- cycle ;
%Shape: Circle [id:dp8346967460321543] 
\draw  [color={rgb, 255:red, 16; green, 73; blue, 140 }  ,draw opacity=1 ][fill={rgb, 255:red, 16; green, 73; blue, 140 }  ,fill opacity=1 ] (97.4,93.46) .. controls (97.4,85.26) and (104.05,78.61) .. (112.26,78.6) .. controls (120.46,78.6) and (127.11,85.25) .. (127.11,93.45) .. controls (127.11,101.66) and (120.46,108.31) .. (112.26,108.31) .. controls (104.06,108.31) and (97.41,101.66) .. (97.4,93.46) -- cycle ;
%Shape: Circle [id:dp9190396670186534] 
\draw  [color={rgb, 255:red, 16; green, 73; blue, 140 }  ,draw opacity=1 ][fill={rgb, 255:red, 16; green, 73; blue, 140 }  ,fill opacity=1 ] (97.4,134.46) .. controls (97.4,126.26) and (104.05,119.61) .. (112.26,119.6) .. controls (120.46,119.6) and (127.11,126.25) .. (127.11,134.45) .. controls (127.11,142.66) and (120.46,149.31) .. (112.26,149.31) .. controls (104.06,149.31) and (97.41,142.66) .. (97.4,134.46) -- cycle ;
%Shape: Circle [id:dp8986404213444509] 
\draw  [color={rgb, 255:red, 16; green, 73; blue, 140 }  ,draw opacity=1 ][fill={rgb, 255:red, 16; green, 73; blue, 140 }  ,fill opacity=1 ] (97.4,173.46) .. controls (97.4,165.26) and (104.05,158.61) .. (112.26,158.6) .. controls (120.46,158.6) and (127.11,165.25) .. (127.11,173.45) .. controls (127.11,181.66) and (120.46,188.31) .. (112.26,188.31) .. controls (104.06,188.31) and (97.41,181.66) .. (97.4,173.46) -- cycle ;

% Text Node
\draw (417,67) node [anchor=north west][inner sep=0.75pt]  [font=\Large] [align=left] {input-port qubit};
% Text Node
\draw (417,108) node [anchor=north west][inner sep=0.75pt]  [font=\Large] [align=left] {exit-port qubit};
% Text Node
\draw (417,147) node [anchor=north west][inner sep=0.75pt]  [font=\Large] [align=left] {switching qubit};

\end{tikzpicture}
    }
    \caption{\footnotesize Binary star resource with two switching qubits, shown as grey vertices. This resource is still blocking, as it can realize only one input-output connection per operational round.}
    \hrulefill
    \label{fig:02}
\end{figure}
%-------
\begin{lemma}[Blocking Condition]
\label{lem:01}
Let $I$ and $O$ be the sets of input- and output-port qubits of an $N$-crossbar resource $\ket{G}$, with $|I| = |O| = N$. 
The crossbar resource $\ket{G}$ is blocking if there exists a targeted graph $G'=(V',E')$ with $E'\subset I\times O$ and $|E'|<N$ such that, for every measurement strategy $\gamma(\cdot)$ that achieves $G'$, it holds that $\Gamma(G')\cap \left((I\cup O)\setminus V'\right)\neq \emptyset $.
\end{lemma}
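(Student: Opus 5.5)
This lemma is essentially a translation of the classical blocking notion into the measurement-based setting, so the plan is to argue directly from the definitions rather than from any graph-theoretic computation. The reasoning is a contrapositive: we relate a ``non-blocking'' fabric to one in which idle ports are never consumed. Classically, a fabric is non-blocking if, for any partial configuration with fewer than $N$ active connections, any idle input can still be connected to any idle output. In the entanglement-based setting, the only way a port becomes unusable is if it is measured. Measuring a qubit removes its vertex by Def.~\ref{def:03}, and by the measurement postulate the qubit cannot afterwards participate in any Bell pair. So the first step is to formalize this observation. If a vertex $v\in (I\cup O)\setminus V'$ belongs to $\Gamma(G')$, then after applying $\mathcal{P}^V_\gamma$ in \eqref{eq:06} the port $v$ is no longer part of the residual state $\ket{\phi_{\mathrm{idle}}}$. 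Consequently no subsequent local Pauli measurement on the remaining qubits can produce a Bell state $\ket{\Phi_{vo}}$ or $\ket{\Phi_{iv}}$ involving it.

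Next, fix the targeted $G'$ with $|E'|<N$ given in the hypothesis. Since $|E'|<N$ and $|I|=|O|=N$, there is at least one idle input $i^\ast$ and one idle output $o^\ast$. The hypothesis states that every strategy achieving $G'$ measures at least one idle port $v$. I would then pick the idle pair so that $v\in\{i^\ast,o^\ast\}$: if $v\in I$ take $i^\ast=v$, otherwise take $o^\ast=v$. The extended matching $E'\cup\{(i^\ast,o^\ast)\}$ is still a valid matching by Def.~\ref{def:04}, because it remains disjoint and has size at most $N$. However, it cannot be reached by keeping the current configuration $G'$ active and adding the new link, since realizing $G'$ necessarily destroys $v$.

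The conclusion is that, whichever way $G'$ is realized, some idle input cannot be connected to some idle output without re-preparing the resource. This violates the classical non-blocking definition adopted from \cite{Clos-53}, so the resource is blocking.

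The main obstacle is conceptual. The definition of blocking is not formally stated before the lemma, so the argument has to fix the interpretation ``idle ports must remain unmeasured (available in $\ket{\phi_{\mathrm{idle}}}$)'' and justify, via the measurement postulate, that a measured port cannot be revived. There is one subtle point: the idle port $v$ consumed may depend on $\gamma$. I would handle this by quantifying per strategy. For each achieving $\gamma$ there exists an idle pair that is blocked, and this suffices for the existential character of blocking.
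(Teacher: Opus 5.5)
Your proposal is correct and follows essentially the same route as the paper's proof in App.~\ref{app:l}: you use $|E'|<N$ to guarantee idle ports on both sides, note that the hypothesized measured idle port $v$ is depleted and cannot later join any Bell pair, and pair $v$ with an idle port on the opposite side to exhibit an idle input-output pair that can no longer be connected without disturbing $G'$, which contradicts the Clos non-blocking criterion. Your explicit remark that the depleted port may depend on $\gamma$, handled by quantifying per strategy, is a small clarification that the paper leaves implicit.
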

\begin{IEEEproof}
Please refer to App.~\ref{app:l}
\end{IEEEproof}
Lemma~\ref{lem:01} gives an operational characterization of blocking in the quantum setting. Specifically, if a forwarding configuration, represented by a targeted graph $G'$, can be achieved only by measuring at least one port qubit that is not part of the active configuration, then such a port is depleted even though it should remain idle. This violates the non-blocking requirement, according to which any idle input-port must remain connectable to any idle output-port, without disturbing the connections already established within the fabric. As a consequence, achieving non-blocking operation in an entanglement-based switching fabric requires that all ports that are not currently engaged in a specific forwarding configuration remain available for subsequent entanglement forwarding. Equivalently, a non-blocking fabric must support the extraction of Bell states for any admissible combination of input- and output-ports, while preserving the reconfigurability of the residual resource. Hence, the parameter $N$ introduced in Def.~\ref{def:05} denotes not only the number of input- and output-ports, but also the maximum number of input-output Bell pairs that can be simultaneously extracted from the crossbar resource $\ket{G}$. This capability depends on the structure of the underlying resource state.

\begin{remark}[Switching qubits]
\label{rem:switching}
Lemma~\ref{lem:01} shows that non-blocking operation cannot be guaranteed if the resource state involves only port qubits. Indeed, port qubits must remain available as endpoints of input-output entanglement links, and therefore they cannot also serve as expendable degrees of freedom for reconfiguring the fabric. This motivates the introduction of an additional class of qubits, termed \textit{switching qubits}, whose role is to mediate entanglement forwarding between input- and output-ports while enabling a non-blocking operational mode.
\end{remark}
By introducing the switching qubits\footnote{Notably, the EC crossbar resource is a two-colorable state with one color including the port qubits and the other including the switching qubits. It is useful to recall that any graph state can be transformed into a two-colorable state with an overhead of additional qubits, which scales at most linearly \cite{CheCacCal-25}.}, the graph associated with the resource state is redefined as $G = (V,E)$, where the augmented vertex set $V = I \cup O \cup S$ collects the disjoint sets $I$ and $O$ of input- and output-port qubits, together with the set $S$ of \textit{switching qubits}. The edge set $E$ captures the entanglement connections among these vertices. However, the mere presence of switching qubits is not sufficient to guarantee the non-blocking condition. For instance, the binary star state~\cite{CheIllCac-24}, illustrated in Fig.~\ref{fig:02}, includes \textit{switching qubits} but still constitutes a \textit{blocking} crossbar resource. We formalize the conditions required for a non-blocking operation in the following design principle.
%-----
\begin{des}[Non-Blocking Entanglement-Based Switching Fabrics]
\label{def:06}
Consider a graph state $\ket{G}$ serving as an $N$-port-pair switching-fabric resource with vertex partition $V = I \cup O \cup S$,  where $I$ and $O$ are the disjoint input- and output-port sets ($I \cap O = \emptyset$), and $S$ is the switching-qubit set disjoint from the ports ($S \cap (I \cup O) = \emptyset$).  
The resource $\ket{G}$ realizes a \textit{non-blocking} $N$-fabric if the following conditions hold:
\begin{itemize}
\item[I)] For any forwarding configuration, namely matching $\mathcal{M}\subset I\times O$ with $|\mathcal{M}|<N$, there exists a measurement strategy $\gamma(\cdot)$ that achieves $\mathcal{M}$ by acting exclusively on switching qubits in $S$, such that
\begin{equation}
\label{eq:07}
\mathcal{P}^V_{\gamma}\ket{G}
\;\overset{\mathrm{LU}}{=}\;
\ket{G'} \otimes \ket{G''},
\end{equation}
where $\ket{G'}$ encodes the requested configuration $\mathcal{M}$, and $\ket{G''}$ is the residual graph state on the idle subsystem $V\setminus (V'\cup \Gamma(G'))$.
\item[II)] The residual graph state $\ket{G''}$ is itself a \textit{non-blocking} $(N-|\mathcal{M}|)$-fabric, i.e., it satisfies this design principle recursively on the remaining idle input- and output-ports.
\end{itemize}
\end{des}
Condition I ensures that any forwarding configuration $\mathcal{M}$ that does not saturate the fabric capacity ($|\mathcal{M}|<N$) can be realized without depleting port qubits. Hence, all idle ports remain available after the configuration is established. Condition II strengthens this requirement by imposing the same property recursively on the residual resource: after any partial configuration, the unmeasured subsystem must still constitute a non-blocking fabric for the remaining idle ports. 
Together, these conditions capture the quantum counterpart of non-blocking operation.
\begin{figure}
    \centering
    \resizebox{0.65\columnwidth}{!}{
    \input{Fig-H}}
    \caption{\footnotesize Monolithic EC crossbar for $N=2$ and $N=4$. }
    \hrulefill
    \label{fig:03}
\end{figure}
Building on this non-blocking tenet, we introduce the \textit{monolithic} \textit{Edge-Controlled} (EC) crossbar, namely an entanglement-based switching fabric that satisfies the above conditions. In this design, a dedicated switching qubit is associated with each input-output pair. More precisely, the EC resource $\ket{G_{\text{EC}}}$ is defined as the graph state obtained by replacing each edge of the complete bipartite graph $K_{N}$ with a two-hop path mediated by a switching qubit, as formalized below and illustrated in Fig.~\ref{fig:03}.
\begin{resm}
\label{res:1}
Let $K_N=(V_K,E_K)$ be the complete bipartite graph with vertex partition 
$V_K = I \cup O$, where $I=\{i_1,\dots,i_N\}$ and $O=\{o_1,\dots,o_N\}$ are the disjoint sets of input- and output-port qubits, respectively, and $ 
E_K = \{\, (i,o), \, \forall i\in I,\; o\in O \,\}=I\times O$.
The \textit{Edge-Controlled (EC) resource state} $\ket{G_{\mathrm{EC}}}$ is the graph state associated with the graph 
$G_{\mathrm{EC}}=(V,E)$ obtained by \textit{enriching} $K_N$ through the introduction of a dedicated switching qubit $s_{i,o}$ for each edge $(i,o)\in E_K$.  
Each new vertex $s_{i,o}$ has exactly two adjacent vertices, namely the endpoints $i$ and $o$, $Adj(s_{i,o})=\{i,o\}$, thereby replacing each direct edge $(i,o)$ of $K_N$ with a two-hop path traversing $s_{i,o}$:
\begin{align}
\label{eq:EC}
V &\eqdef I \cup O \cup S,\\
S &\eqdef \{\, s_{i,o} \;|\; (i,o)\in E_K \,\text{\rm with}\,\, Adj(s_{i,o})=\{i,o\} \},\\
E &\eqdef \{\, (i,s_{i,o}), (s_{i,o},o) \;|\; (i,o)\in E_K \,\}.
\label{eq:ec-3}
\end{align}
\end{resm}
It is clear that each switching qubit controls the activation or inhibition of a direct entanglement link between the corresponding input qubit $i$ and output qubit $o$, through local Pauli measurements. 
The EC structure induces a natural partition of the switching-qubit set $S$ once a matching is realized. Specifically, given a matching $E'\subset I\times O$ as in Def.~\ref{def:04}, it is possible to identify the sets of active input- and output-ports as follows:
\begin{align}
    I' \eqdef \{\, i\in I \mid \exists!\,o\in O:\ (i,o)\in E'\,\}\\
    O' \eqdef \{\, o\in O \mid \exists!\,i\in I:\ (i,o)\in E'\,\}.
\end{align}
 Stemming from these sets, the switching set $S$ can be partitioned as follows:
\begin{itemize}
    \item{\textbf{Mediators:}} the switching qubits involved in activating the matching $E'$:
    \begin{equation}
    \label{eq:11}
    S_{\mathrm{med}} 
    \;\eqdef\; 
    \bigl\{\, s_{i,o}\in S \;\big|\; (i,o)\in 
     E'\}.
    \end{equation}\item{\textbf{Competitors:}} the switching qubits associated with edges connecting an active port either to idle ports or to other active ports not belonging to the current matching $E'$: 
    \begin{equation}
    \label{eq:12}
    S_\mathrm{com} 
    \;\eqdef\; 
    \bigl\{\, s_{i,o}\in S \;\big|\; (i,o)\in 
    \big[(I'\times O)\;\cup\;
    \bigl(I \times O'\bigr)\big] \setminus E'\}.
    \end{equation}
    \item{\textbf{Unused:}} the switching qubits associated with edges connecting idle ports, i.e.: \begin{equation}
    \label{eq:13}
    S_\mathrm{idle} 
    \;\eqdef\; 
    \bigl\{\, s_{i,o}\in S \;\big|i \notin I' \wedge o \notin O'\}.
    \end{equation}
\end{itemize}
Accordingly, mediators determine the direct connections as specified by $E'$, competitors represent potentially conflicting links incident to active ports, and the unused switching qubits correspond to the idle portion of the crossbar. In the following proposition, we prove that the EC resource is non-blocking. 
\begin{prop}[Non-Blocking EC Crossbar]
\label{prop:01}
Let $\ket{G_{\text{EC}}}$ be a $N$-EC crossbar resource. For any matching $\mathcal{M}\subset I\times O$, the EC resource is non-blocking according to Design Principle~\ref{def:06}.
\end{prop}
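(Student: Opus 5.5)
Our plan is to exhibit an explicit measurement strategy acting only on $S$ and then verify Conditions I and II of Design Principle~\ref{def:06} using the graph rules for Pauli measurements recalled in Sec.~\ref{sec:2}: $\sigma_z$ deletes a vertex, and $\sigma_y$ acts as local complementation followed by deletion. Fix a matching $\mathcal{M}=E'$ with $|E'|<N$ and use the partition $S=S_{\mathrm{med}}\cup S_{\mathrm{com}}\cup S_{\mathrm{idle}}$ of \eqref{eq:11}--\eqref{eq:13}. We define $\gamma$ on $A=S_{\mathrm{med}}\cup S_{\mathrm{com}}$ by
\[
\gamma(s)=\sigma_z^{(s)}\ \ \forall s\in S_{\mathrm{com}},\qquad \gamma(s)=\sigma_y^{(s)}\ \ \forall s\in S_{\mathrm{med}}.
\]
Qubits in $S_{\mathrm{idle}}$ are left unmeasured, so $\Gamma(G')=S_{\mathrm{med}}\cup S_{\mathrm{com}}\subseteq S$. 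Since every switching qubit has exactly two neighbours, both of them ports, the order of measurement does not matter.

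First, we perform the $\sigma_z$ measurements on $S_{\mathrm{com}}$. Each one deletes $s_{i,o}$ together with its two edges. Afterwards, an active input $i\in I'$ is adjacent only to $s_{i,o}$ with $(i,o)\in E'$, and the same holds for each active output. This is because every other $s_{i,o'}$ or $s_{i',o}$ incident to an active port lies, by \eqref{eq:12}, in $S_{\mathrm{com}}$. Consequently, for each $(i,o)\in E'$ the connected component containing $i$ is the path $i - s_{i,o} - o$, which is isolated from the rest of the graph.

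Second, we measure each mediator in $\sigma_y$. Local complementation at $s_{i,o}$ toggles the edge $(i,o)$, which was absent before. Deleting $s_{i,o}$ then leaves the single edge $(i,o)$, i.e., a Bell pair up to local unitaries. Alternatively, a $\sigma_x$ measurement on the middle vertex of a three-vertex path gives the same result. This yields $\ket{G'}\overset{\mathrm{LU}}{=}\bigotimes_{(i,o)\in E'}\ket{\Phi_{io}}$, tensored with the state on the remaining vertices, and establishes Condition I with no port depleted.

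For Condition II, we identify the residual graph. Its vertices are $(I\setminus I')\cup(O\setminus O')\cup S_{\mathrm{idle}}$, and by \eqref{eq:13} and \eqref{eq:ec-3} its edges are exactly $\{(i,s_{i,o}),(s_{i,o},o)\}$ for $i\in I\setminus I'$ and $o\in O\setminus O'$. The point to check is that no measurement touched this subgraph. Deleting a competitor removes only edges incident to that competitor. Local complementation at a mediator acts only on the neighbourhood $\{i,o\}$, both of which are active. Hence $\ket{G''}$ is precisely the EC resource built on $K_{N-|\mathcal{M}|}$ over the idle ports, and $|I\setminus I'|=|O\setminus O'|=N-|\mathcal{M}|$. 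We then conclude by induction on $N$. The base case $N=1$ is the path $i - s - o$, for which the only matching with $|\mathcal{M}|<1$ is empty and is handled trivially. The inductive step applies the construction above to $\ket{G''}$, so the recursive requirement of Condition II holds.

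The main obstacle we expect is the bookkeeping in this last step. We must show that the competitor set in \eqref{eq:12} is exactly what is needed to decouple the Bell pairs from the residual graph, and that $S_{\mathrm{idle}}$ together with the idle ports forms a closed EC structure with no leftover edges. We must also confirm that the LU corrections from the measurement outcomes act only locally on the ports and therefore preserve the graph-state form of $\ket{G''}$. We would handle this by a direct neighbourhood computation using $Adj(s_{i,o})=\{i,o\}$.
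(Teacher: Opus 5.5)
Your proposal is correct and follows essentially the same route as the paper's proof. It uses the identical $\sigma_y$-on-mediators / $\sigma_z$-on-competitors strategy, establishes port safety $\Gamma(G')\subseteq S$, identifies the residual as the $(N-|\mathcal{M}|)$-EC crossbar on the idle ports, and recurses; your observation that after the competitor deletions each matched pair forms an isolated path $i - s_{i,o} - o$ makes the paper's Steps~2--3 more explicit. One quibble on framing: for $\mathcal{M}=\emptyset$ nothing is measured and the residual is the original fabric, so Condition~II is established as a closure property (EC crossbars satisfy Condition~I and their residuals are again EC crossbars), not by a strict induction on $N$ with a ``trivial'' base case --- which is also how the paper's Step~4 has to be read.
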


In essence, the non-blocking property arises from two \textit{invariants} of the proposed design: \textit{the port-safety} and \textit{the disjoint-switching-domain} properties. Each switching qubit mediates an independent input–output connection, ensuring that partial measurements never deplete the port qubits nor disturb unconnected edges. Consequently, the residual subsystem consistently preserves the same EC structure, thereby maintaining the non-blocking operational mode recursively.
Therefore, new direct entanglement links can always be established on top of an existing entanglement forwarding configuration by performing Pauli measurements only on idle switching qubits, while port qubits remain untouched. 

%----------------------------------
%---------Sec. IV------------------
%----------------------------------
\section{Modular Entanglement-based Switching Fabric Design}
\label{sec:4}

As detailed in Sec.~\ref{sec:5}, the monolithic EC crossbar requires $N^2$ switching qubits. This quadratic scaling reflects the price of a single-stage non-blocking design: the architecture is structurally simple and operationally direct, but its resource overhead grows with the square of the port number. 
Building upon the EC design, we now introduce a modular alternative that preserves the same non-blocking guarantee while reducing the switching-qubit count for sufficiently large $N$.

\subsection{Clos-Type Entanglement-Based Fabric}
\label{sec:4.1}

In classical switching theory, the Clos architecture provides a systematic way to realize large non-blocking switching fabrics by interconnecting smaller switching modules across multiple stages, thereby avoiding the quadratic growth of crosspoints inherent to monolithic crossbars. A three-stage Clos topology interconnects $N$ inputs to $N$ outputs through three stages of switches, referred to as ingress, middle, and egress stages. As represented in Fig.~\ref{fig:04}, this topology is parameterized by $(r,n,m) \in \mathbb{N}^3$, with $N=rn$. Specifically, it consists of an ingress stage composed of $r=\frac{N}{n}$ switches of size $n\times m$, a middle stage composed of $m$ switches of size $r\times r$, and an egress stage composed of $r=N/n$ switches of size $m\times n$. Each ingress switch is connected to all middle-stage switches, and each middle-stage switch is connected to all egress switches. If the number of middle-stage switches satisfies the condition $m\geq 2n-1$, then the classical three-stage Clos architecture is strictly non-blocking \cite{Clos-53}.
We adopt this architectural blueprint to move from monolithic EC fabric to a modular Clos-type entanglement-based switching fabric, where each classical switching module is replaced by a monolithic EC resource. Specifically, every classical $(n \times m)$, $(r \times r)$, and $(m \times n)$ switch in the ingress, middle, and egress stages, respectively, is substituted by an EC resource with the corresponding number of input- and output-port qubits.

\begin{resmd}
\label{def:clos_resource}
Let $r,n,m\in\mathbb{N}$, with $N=rn$, and consider a three-stage Clos-type architecture composed of: (i) $r$ ingress-stage modules, (ii) $m$ middle-stage modules, and (iii) $r$ egress-stage modules. Each module is an EC resource according to~\eqref{eq:EC}--\eqref{eq:ec-3}. We define the stage index sets as:
\begin{equation}
    \mathcal{L}\eqdef\{1,\dots,r\},\,\, 
    \mathcal{C}\eqdef\{1,\dots,m\},\,\,
    \mathcal{R}\eqdef\{1,\dots,r\}.
\end{equation}
A generic ingress-stage module is denoted by $L_\ell$ with $\ell\in\mathcal{L}$, a middle-stage module by $C_j$ with $j\in\mathcal{C}$, and an egress-stage module by $R_\rho$ with $\rho\in\mathcal{R}$. Accordingly, we define the module sets as:
\begin{align}
     L\eqdef\{L_\ell\}_{\ell\in\mathcal{L}},\,\, C\eqdef\{C_j\}_{j\in\mathcal{C}},\,\, R\eqdef\{R_\rho\}_{\rho\in\mathcal{R}},  
\end{align}
and the collection of all modules as:
\begin{align}
    &W=L\cup C \cup R.  
\end{align}
According to the Clos parametrization, each ingress-stage module $L_\ell \in L$ has $n$ external input-port qubits and $m$ inter-stage output-port qubits:
\begin{align}
I_{L_\ell} &\eqdef \{\, i_{\ell,a} \mid a\in\{1,\dots,n\}\,\},\,
O_{L_\ell} \eqdef \{\, o_{\ell,j} \mid j\in\mathcal{C}\,\}.
\label{eq:clos_labeling}
\end{align}
In \eqref{eq:clos_labeling}, the index $j$ used to label $o_{\ell,j}$ is inherited from the Clos interconnection pattern: the $j$-th output-port of $L_\ell$ is connected to the middle-stage module $C_j$. Thus, no additional local port index is needed for $O_{L_\ell}$, since its ports are naturally indexed by the middle-stage modules they connect to. Similarly, each middle-stage module $C_j$, has $r$ inter-stage input-port qubits and $r$ inter-stage output-port qubits:
\begin{align}
I_{C_j} &\eqdef \{\, i_{j,\ell} \mid \ell\in\mathcal{L}\,\},\,
O_{C_j} \eqdef \{\, o_{j,\rho} \mid \rho\in\mathcal{R}\,\}.
\end{align}
The label $\ell$ in $i_{j,\ell}$ univocally identifies the ingress module $L_\ell$ connected to that port, while the label $\rho$ in $o_{j,\rho}$ identifies the egress module $R_\rho$. Each egress-stage module $R_\rho$ has $m$ inter-stage input-port qubits and $n$ external output-port qubits:
\begin{align}
I_{R_\rho} \eqdef \{\, i_{\rho,j} \mid j\in\mathcal{C}\,\},
O_{R_\rho} \eqdef \{\, o_{\rho,a} \mid a\in\{1,\dots,n\}\,\},
\end{align}
where label $j$ in $i_{\rho,j}$ identifies univocally the middle module $C_j$ connected to that port. Each module $B\in W$ is endowed with its own set of intra-module switching-qubits $S_B$ and its own intra-module edge set $E_B$ according to the EC construction~\eqref{eq:EC}--\eqref{eq:ec-3}:
\begin{align}
&S_B \eqdef \{\, s^{(B)}_{u,v}\mid u\in I_B,\ v\in O_B\},
\, Adj\!\left(s^{(B)}_{u,v}\right)=\{u,v\} \\&
E_B
\eqdef
\{\, (u,s^{(B)}_{u,v}),\ (s^{(B)}_{u,v},v)\mid u\in I_B,\ v\in O_B \,\}.
\end{align}
The Clos interconnection is defined by the following inter-module edges. For every $(\ell,j)\in\mathcal{L}\times\mathcal{C}$, the output-port $o_{\ell,j}\in O_{L_\ell}$ is connected to the input-port $i_{j,\ell}\in I_{C_j}$:
\begin{equation}
\label{eq:LC}
E^{LC}\eqdef \{\, (o_{\ell,j},\, i_{j,\ell}) \mid \ell\in\mathcal{L},\ j\in\mathcal{C}\,\}.
\end{equation}
Similarly, for every $(j,\rho)\in\mathcal{C}\times\mathcal{R}$, the output-port $o_{j,\rho}\in O_{C_j}$ is connected to the input-port $i_{\rho,j}\in I_{R_\rho}$:
\begin{equation}
\label{eq:CR}
E^{CR}\eqdef \{\, (o_{j,\rho},\, i_{\rho,j}) \mid j\in\mathcal{C},\ \rho\in\mathcal{R}\,\}.
\end{equation}
Accordingly, the global external input- and output-port sets of the Clos-type fabric are defined as:
\begin{align}
I &\eqdef \bigcup_{\ell\in\mathcal{L}} I_{L_\ell},\,
O \eqdef \bigcup_{\rho\in\mathcal{R}} O_{R_\rho}.
\end{align}
The internal Clos ports are collected in the set
\begin{equation}
P\eqdef \big ( \bigcup_{B \in L \cup C} O_B ) \cup \big ( \bigcup_{B \in C \cup R} I_B \big).
\label{eq:clos_internal_ports}
\end{equation}
These vertices play a dual role. Locally, they are input- or output-port qubits of the corresponding EC modules. Globally, however, they act as switching qubits of the Clos-type fabric, since they mediate entanglement forwarding across different stages and enable the interconnection of distinct EC modules. Therefore, the global switching-qubit set of the Clos-type fabric includes both the internal Clos ports in $P$ and the intra-module switching qubits of all EC modules:
\begin{equation}
S \eqdef P
\cup \left(\bigcup_{B\in W} S_B\right).
\label{eq:clos_switching_set}
\end{equation}
The Clos-type EC graph is then $G=(V,E)$, with
\begin{align}
V &\eqdef 
I\cup O\cup S,
\label{eq:clos-1}\\
E &\eqdef 
\left(\bigcup_{B\in W} E_B\right)
\cup E^{LC}\cup E^{CR}.
\label{eq:clos-3}
\end{align}
The \textit{Clos-type EC resource} is the graph state $\ket{G}$ associated with the graph $G=(V,E)$.
\end{resmd}

%---------
\begin{figure}
    \centering
    \resizebox{0.8\columnwidth}{!}{
    \input{Fig-J}
    }
    \caption{\footnotesize Schematic representation of a three-stage Clos topology with stages $L, C, R$. The ingress stage $L$ and the egress stage $R$ contain $r$ modules of size $n\times m$ and $m\times n$, respectively. The middle stage $C$ contains $m$ modules of size $r\times r$. Overall, the topology realizes a switching fabric with $N=rn$ external inputs and $N=rn$ external outputs.}
    \hrulefill
    \label{fig:04}
\end{figure}
%-------
Fig.~\ref{fig:05} illustrates a Clos-type EC fabric with $N=6$, $r=3$, $n=2$, and $m=3$. The ingress input-ports and egress output-ports form the external interface, while the remaining qubits form the global switching set $S= P \cup \left(\bigcup_{B\in W} S_B\right)$. The intra-module qubits $\bigcup_{B\in W} S_B$ mediate forwarding within each EC module, whereas the internal Clos ports $P$ mediate forwarding across consecutive stages. 

Let $E' \subset I \times O$ be a matching between the external input- and output-ports of the Clos-type fabric. Similarly to the monolithic case, we define the active external port sets as:
\begin{align}
I' &\eqdef \{\, i\in I  \mid \exists!\, o\in O :\ (i,o)\in E' \,\},\\
O' &\eqdef \{\, o\in O  \mid \exists!\, i\in I :\ (i,o)\in E' \,\}.
\end{align}
As discussed in Sec.~\ref{sec:3}, in the monolithic EC crossbar, the matching $E'$ directly induces a partition of the switching-qubits: each pair $(i,o)\in E'$ identifies its unique dedicated mediator $s_{i,o}$, and the corresponding competitor and unused sets follow without ambiguity. In the Clos-type fabric, this one-to-one correspondence breaks down: the same matching $E'$ can be realized through multiple internal paths, depending on which middle-stage module is selected for each connection. Each such selection induces a different partition of the global switching set $S$, and thus a different measurement strategy. Hence, the internal path selection must be made explicit, as formalized by the following definition. 

\begin{definition}[Middle-Stage Assignment]
    \label{def:06}
    Let $E'\subset I\times O$ be a matching between the external input- and output-port qubits of a Clos-type EC fabric with stage index sets $\mathcal{L}$, $\mathcal{C}$, $\mathcal{R}$. A \textit{middle-stage assignment} for $E'$ is a function $\Lambda: E'\rightarrow\mathcal{C}$, which assigns to each requested input-output pair $(i,o)\in E'$ a middle-stage module index $\Lambda(i,o)\in\mathcal{C}$, such that, for any two distinct pairs  $(i,o),(i',o')\in E'$, the following holds:
    \begin{align}
    \label{eq:routing}
    \nonumber
    \Big(\exists &\ell\in\mathcal{L}: i,i'\in I_{L_\ell} \Big)\  \vee\ \Big( \exists\,\rho\in\mathcal{R}: o,o'\in O_{R_\rho} \Big) \\& \Longrightarrow \quad \Lambda(i,o)\neq \Lambda(i',o').
    \end{align}
\end{definition}

Condition~\eqref{eq:routing} prevents contention on inter-stage links. Indeed, two connections originating from the same ingress module $L_\ell$ cannot use the same middle-stage module $C_j$, since $L_\ell$ has a unique inter-stage port toward $C_j$. Similarly, two connections terminating at the same egress module $R_\rho$ cannot use the same middle-stage module $C_j$, since $R_\rho$ has a unique inter-stage port connected from $C_j$.

\begin{remark}
    The Clos-type EC fabric mirrors exactly the classical SNB Clos structure. Indeed, the internal-path selection algorithm is responsible only for selecting the middle-stage module used to realize each connection in the matching. But it does not specify the Pauli-measurement strategy or any other quantum operation. Accordingly, any classical routing algorithm for SNB Clos architectures can be used as the internal-path selection algorithm for the Clos-type EC fabric.
\end{remark} 
\noindent \paragraph*{Assignment-induced local matchings}
The assignment $\Lambda$ maps the external matching $E'$ into three local matchings, one for each stage of the Clos-type fabric. Indeed, let us consider a pair $(i,o) \in E'$ with $i\in I_{L_\ell}$ and $o\in O_{R_\rho}$ and let $\Lambda(i,o)=j$. The selected middle-stage module $C_j$ induces the following local input-output pairs:
\begin{itemize}
    \item in the ingress module $L_\ell$: $(i,\; o_{\ell,j})\in I_{L_\ell}\times O_{L_\ell}$,
    \item in the middle module $C_j$: $(i_{j,\ell},\; o_{j,\rho})\in I_{C_j}\times O_{C_j}$,
    \item in the egress module $R_\rho$: $(i_{\rho,j},\; o)\in I_{R_\rho}\times O_{R_\rho}$.
\end{itemize}
Indeed, once $C_j$ is selected, the Clos interconnection uniquely identifies the output port $o_{\ell,j}$ of $L_\ell$ connected to the input port $i_{j,\ell}$ of $C_j$, as specified by~\eqref{eq:LC}. Similarly, the output port $o$ determines the egress module $R_\rho$, and hence the output port $o_{j,\rho}$ of $C_j$ connected to the input port $i_{\rho,j}$ of $R_\rho$, as specified by~\eqref{eq:CR}. By applying the above reasoning to each input-output pair in the matching, the global matching $E'$ together with the middle-stage assignment $\Lambda$ induces, for each module $B\in W$, a local EC matching $E'_B\subset I_B\times O_B$:
\begin{align}
E'_{L_\ell} 
&\eqdef 
\{\, (i,o_{\ell,j}) \mid (i,o)\in E',\ i\in I_{L_\ell},\ \Lambda(i,o)=j \,\},\\
E'_{C_j} 
&\eqdef 
\{\, (i_{j,\ell},o_{j,\rho}) \mid (i,o)\in E',\ i\in I_{L_\ell},\ o\in O_{R_\rho},\nonumber\\
&\quad \quad \quad \quad \quad \quad  \quad  \quad \quad \quad \quad \quad \quad \quad \,\,\, \Lambda(i,o)=j \,\},\\
E'_{R_\rho} 
&\eqdef 
\{\, (i_{\rho,j},o) \mid (i,o)\in E', o\in O_{R_\rho},\ \Lambda(i,o)=j\}.
\end{align}
The admissibility condition in~\eqref{eq:routing} guarantees that each $E'_B$ is a valid matching within the corresponding EC module, since no two local pairs share the same local input or output port.
\begin{figure}
    \centering
    \resizebox{0.8\columnwidth}{!}{
    \input{Fig-K}
    }
    \caption{\footnotesize Modular EC fabric with $N=6,r=3,n=2,m=3$.}
    \label{fig:05}
    \hrulefill
\end{figure}
With these definitions, we can now express the partition of the Clos-type EC fabric qubits induced by a given matching $E'$ and middle-stage assignment $\Lambda$.
\paragraph*{Active inter-stage ports}
For each module $B\in W$, we define the sets of active inter-stage input- and output-ports as:
\begin{align}
I'_B 
&\eqdef 
\{\, u\in I_B \mid \exists!\, v\in O_B:\ (u,v)\in E'_B\,\},\\
O'_B 
&\eqdef 
\{\, v\in O_B \mid \exists!\, u\in I_B:\ (u,v)\in E'_B\,\}.
\end{align}
Then, the set of assignment-induced mediator port-qubits is the subset of active ports that are internal to the fabric:
\begin{equation}
P_{\mathrm{med}}
\eqdef
\Bigl( \bigcup_{B\in W} (I'_B \cup O'_B) \Bigr)\cap P.
\end{equation}
Thus, the matching $E'$ and the assignment $\Lambda$ induce a partition of the global switching-qubit set $S \eqdef P
\cup \left(\bigcup_{B\in W} S_B\right)$ into mediators, competitors, and idle qubits as follows.
\begin{itemize}
\item  \textbf{Mediators:}
\begin{align}
&S^{(B)}_{\mathrm{med}}
\eqdef
\bigl\{\, s^{(B)}_{u,v}\in S_B : (u,v)\in E'_B \,\bigr\},\\&
S_{\mathrm{med}}
\eqdef
P_{\mathrm{med}}
\ \cup\
\bigcup_{B\in W} S^{(B)}_{\mathrm{med}}.
\end{align}
\item \textbf{Competitors:}
\begin{align}
S^{(B)}_{\mathrm{com}}
&\eqdef
\Bigl\{ s^{(B)}_{u,v}\in S_B:(u,v)\in
\nonumber\\
&\quad
\bigl[(I'_B\times O_B)\cup(I_B\times O'_B)\bigr]
\setminus E'_B
\,\Bigr\},\\
S_{\mathrm{com}}
&\eqdef
\bigcup_{B\in W} S^{(B)}_{\mathrm{com}}.
\end{align}
There are no competitor inter-stage ports: the admissibility condition in~\eqref{eq:routing} ensures that no two routed connections contend for the same internal Clos port.

\item \textbf{Idle qubits:}
\begin{align}
S^{(B)}_{\mathrm{idle}}
&\eqdef
\bigl\{\, s^{(B)}_{u,v}\in S_B :
u\notin I'_B \ \wedge\ v\notin O'_B
\bigr\},\\
P_{\mathrm{idle}}
&\eqdef
P\setminus P_{\mathrm{med}}, \quad
S_{\mathrm{idle}}
\eqdef P_{\mathrm{idle}}  \cup \,
\bigcup_{B\in W} S^{(B)}_{\mathrm{idle}}.
\end{align}
    
\end{itemize}
Thus, a matching $E'$ together with a middle-stage assignment induces a well-defined partition of the Clos-type fabric qubits into mediator, competitor, and idle qubits. Mediators belong to the internal paths realizing the requested external input-output connections. Competitors correspond to intra-module links incident to active local ports but not selected by the local matchings. Idle qubits represent the unused portion of the fabric. 
In the following, we prove that the modular EC fabric satisfies the Design Principle~\ref{def:06}.

\begin{prop}[Non-Blocking Clos-type EC fabric]
\label{cor:02}
Let $\ket{G}$ denote a three-stage Clos-type EC resource with $(r,n,m)\in\mathbb{N}^3$ and $N = rn$. If $m\geq 2n-1$, i.e., the classical strictly non-blocking (SNB) Clos condition holds, then the modular Clos-type resource $\ket{G}$ realizes a non-blocking $N$-port-pair switching fabric according to Design Principle~\ref{def:06}. 
\end{prop}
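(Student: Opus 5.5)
The plan is to reduce the statement to Proposition~\ref{prop:01} applied module by module, with the classical strictly non-blocking Clos argument supplying the middle-stage assignment. Fix a matching $\mathcal{M}=E'\subset I\times O$ with $|E'|<N$.

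The first step is to construct a middle-stage assignment $\Lambda$ in the sense of Def.~\ref{def:06}. I would build it greedily, one pair at a time. When we route $(i,o)$ with $i\in I_{L_\ell}$ and $o\in O_{R_\rho}$, at most $n-1$ other pairs leave $L_\ell$ and at most $n-1$ other pairs enter $R_\rho$. These pairs forbid at most $2n-2$ middle modules. Since $m\ge 2n-1$, at least one admissible $C_j$ always remains. This is exactly Clos's counting argument, and it gives a valid $\Lambda$ together with the induced local matchings $E'_B$. By~\eqref{eq:routing}, each $E'_B$ is a valid EC matching.

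The second step builds the measurement strategy, and it acts only on $S=P\cup\bigcup_B S_B$, so external ports are never touched. Inside each module $B$, I would use the strategy given by Proposition~\ref{prop:01} for the local matching $E'_B$. That is, measure the competitors $S^{(B)}_{\mathrm{com}}$ in $\sigma_z$, which deletes them, and measure the mediators $s^{(B)}_{u,v}$ in $\sigma_y$ (or $\sigma_x$), which contracts the two-hop path $u-s-v$ into an edge $(u,v)$ up to LU. Idle switching qubits are left unmeasured.

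After this step, each routed pair $(i,o)$ with $\Lambda(i,o)=j$ has become a path
\[
i - o_{\ell,j} - i_{j,\ell} - o_{j,\rho} - i_{\rho,j} - o .
\]
Its interior vertices lie in $P_{\mathrm{med}}$. Moreover, since every competitor has been deleted, each of these interior vertices has degree exactly $2$.

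To finish the pair, I measure these four interior vertices, for example in $\sigma_y$ sequentially. Each such measurement on a degree-2 vertex contracts the path by local complementation, and the result is a Bell pair $\ket{\Phi_{io}}$ that is disconnected from everything else.

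One point must be checked here: the interior vertices must not be connected to idle switching qubits. An active local port $u\in I'_B$ was adjacent to every $s^{(B)}_{u,v}$, but all of those are either its mediator or competitors, so they have been handled. The inter-stage edges of $u$ go only to its unique Clos partner, which is also on the path. This is the port-safety/disjoint-domain invariant, lifted to the Clos graph. It yields the factorization~\eqref{eq:07} and establishes Condition I.

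For Condition II, I will show that the residual graph $G''$ is again a Clos-type EC structure on the idle ports, with possibly reduced modules. Each module $B$ loses exactly its active ports and their incident switching qubits. What remains is the EC resource of $K$ on $(I_B\setminus I'_B)\times(O_B\setminus O'_B)$, and the unused inter-stage edges $E^{LC}$ and $E^{CR}$ stay intact. The residual fabric therefore has $L_\ell$ with $n-k_\ell$ external inputs and the remaining middle modules available to it, which is exactly the classical Clos state with existing connections in place. The greedy counting argument of the first step applies verbatim to any further request, because strict non-blockingness in the classical sense holds with respect to any set of existing connections. So the recursion closes by induction on $N-|\mathcal{M}|$: Condition I for the residual graph is proved exactly as above.

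The main obstacle is this recursion. The residual object is not literally a three-stage Clos-type EC resource with uniform parameters $(r,n',m')$, because different modules lose different numbers of ports. I would therefore generalize the induction hypothesis to \emph{Clos-type EC residuals}, namely the graph obtained after removing any set of routed paths. I would prove Conditions I and II for this whole class, using the classical SNB invariant that every ingress or egress module has at most $n$ busy connections. This avoids needing the residual to be of the exact original form.
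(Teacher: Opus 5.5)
Your proposal is correct and follows the paper's proof essentially step for step: the Clos counting argument for $\Lambda$, $\sigma_z$ on $S_{\mathrm{com}}$ and $\sigma_y$ on all of $S_{\mathrm{med}}$ (including the internal ports in $P_{\mathrm{med}}$), contraction of each routed degree-two path into $\ket{\Phi_{io}}$, and the residual as the idle EC sub-structure plus the unused inter-stage edges; your enlarged induction class of Clos-type EC residuals for Condition~II is more careful than the paper, which only writes down $G''$ and notes $E''_w\neq\emptyset$. One small correction: drop the parenthetical $\sigma_x$ alternative for mediators, because an $X$ measurement on a degree-two vertex whose two neighbours both have further neighbours (as in a middle module, where both local ports keep their inter-stage partners) merges those neighbours into a redundantly encoded vertex rather than producing the edge $(u,v)$.
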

\begin{IEEEproof}
    Please refer to App.~\ref{app:03}
\end{IEEEproof}
The above non-blocking property follows from two cooperating mechanisms. At the module level, each EC subgraph independently satisfies the port-safety and disjoint-switching-domain invariants established in Prop.~\ref{prop:01}: no intra-module forwarding operation ever depletes a port qubit. At the fabric level, the SNB condition $m\geq 2n-1$ guarantees that, as long as idle external ports remain, at least one middle-stage module is always available to mediate the forwarding. Together, these two properties ensure reconfigurability, where every partial forwarding configuration leaves a residual graph state that is itself a valid non-blocking fabric on the remaining idle ports, satisfying Design Principle~\ref{def:06} recursively.
\begin{remark}
    The non-blocking property of monolithic and modular resources is structural: it is determined by the structure of the underlying graph itself and can therefore be certified a priori, independently of any runtime reconfiguration or routing strategy.
\end{remark}
%----------------------------
%------------Sec. V-----------
\section{Analysis and Comparison}
\label{sec:5}
We compare the proposed EC switching fabrics with three representative EPR-based architectures. These baselines are not chosen arbitrarily. Rather, they capture three recurrent ways of realizing input-output entanglement forwarding with bipartite resources: (i) a fully pre-shared bipartite architecture, where one EPR pair is available for every possible input-output pair; (ii) a star architecture, where all requested connections contend for a shared BSM resource as in \cite{IneAlvCho-24}; and (iii) a memory-relay architecture, in which quantum-memory banks are interconnected through a switchboard as in~\cite{LeeBerDah-22}.

This comparison also mirrors the classical taxonomy of switching fabrics. The star architecture is the quantum analogue of switching via a shared bus: a centralized BSM resource serializes the service of multiple input-output requests. The memory-relay router resembles switching via memory: forwarding is mediated by memory banks and an active internal switchboard, which must select, pair, and process independently generated bipartite resources before the requested input-output entanglement is completed. In contrast, the proposed EC fabrics implement the quantum analogue of switching via a crossbar interconnection network: independent input-output connections are activated in parallel through measurements on a pre-generated multipartite resource. The purpose of the comparison is therefore to isolate the architectural distinction between EPR-based and multipartite-based forwarding.

%-------------
\subsection{Forwarding Latency}
\label{sec:5.1}
To characterize the architectural contribution to latency, we distinguish between two complementary quantities. The \textit{forwarding depth} $D(\mathcal{M})$ is the number of sequential operational rounds required to realize $\mathcal{M}$. $D(\mathcal{M})$ captures the intrinsic degree of serialization imposed by the forwarding architecture. The \textit{forwarding latency} $\bar{T}(\mathcal{M})$ is the expected time to complete all forwarding operations for $\mathcal{M}$, and additionally captures the impact of probabilistic primitives. 
Formally, let $\mathcal{M}\subseteq I\times O$ be a matching with $|\mathcal{M}|\leq N$, as in Def.~\ref{def:04}. We define $\bar{T}(\mathcal{M})$ as the expected time required to establish all $|\mathcal{M}|$ Bell pairs specified by $\mathcal{M}$, starting from the instant at which the matching is assigned to the switching fabric. Let $\delta$ denote the number of available Bell-state-measurement (BSM) devices in EPR-based architectures, and let $\sigma$ denote the number of available Pauli-measurement apparatuses in multipartite-resource architectures. The case $\delta=1$ ($\sigma=1$) corresponds to a single measurement device unit and gives the \textit{sequential depth}, representing the worst case over parallel configurations, while with $\delta$($\sigma$) parallel units, the depth reduces to the parallel depth. The resulting architectural comparison is summarized in Tab.~\ref{tab:latency_scaling}. We now analyze each architecture in turn.

\begin{figure}
   \centering
    \resizebox{0.85\columnwidth}{!}{
    \input{Fig-M}
    }
    \caption{\footnotesize EPR-based switching fabric architectures.}
    \label{fig:M}
    \hrulefill
\end{figure}

\begin{table*}[t]
\centering
\caption{Forwarding-latency comparison grouped by matching-awareness regime.}
\label{tab:latency_scaling}
\begin{tabularx}{\textwidth}{l c c c X}
\toprule
\textbf{Architecture} 
& \textbf{Regime}
& \textbf{Sequential depth}
& \textbf{Parallel depth}
& \textbf{Expected forwarding latency} \\
\midrule

Fully bipartite
& Matching-oblivious
& $0$
& $0$
& $\approx 0$ \\[4pt]

EC fabric
& Matching-oblivious
& --
& $\left\lceil N_{\mathrm{meas}}(\mathcal M)/\sigma \right\rceil$;
  $1$ if $\sigma\geq N_{\mathrm{meas}}(\mathcal M)$
& $O\!\left(\left\lceil N_{\mathrm{meas}}(\mathcal M)/\sigma \right\rceil(\tau_q+\tau_c)\right)$;
  $\tau_q+\tau_c$ if $\sigma\geq N_{\mathrm{meas}}(\mathcal M)$ \\[10pt]

\midrule

Star architecture
& Matching-driven
& $|\mathcal M|$
& $\left\lceil |\mathcal M|/\delta \right\rceil$
& $O\!\left(\left\lceil |\mathcal M|/\delta \right\rceil / p_{\mathrm{BSM}}\right)$ \\[8pt]

Memory-relay router
& Matching-driven
& $|\mathcal M|$
& $\left\lceil |\mathcal M|/\delta \right\rceil$
& $O\!\left(\left\lceil |\mathcal M|/\delta \right\rceil /(p_{\mathrm{gen}}p_{\mathrm{BSM}}^{2})\right)$ \\
\bottomrule
\end{tabularx}
\end{table*}

\paragraph*{Fully bipartite architecture}
In this architecture, one Bell pair is pre-established for every input-output pair. Under the assumption that all Bell pairs required by $\mathcal{M}$ are already available, forwarding does not require any additional switching operation. Hence, $D(\mathcal{M})=0$ and the corresponding forwarding latency is negligible. This regime, however, is achieved at the price of a quadratic memory overhead: each input port must be associated with a bank of $N$ memory qubits, so that the architecture maintains $N^2$ pre-shared Bell pairs. This resource cost is discussed further in Sec.~\ref{sec:5.2}.

\paragraph*{Star architecture}
In this architecture, all input-output connections are realized through a shared BSM resource. In the single-device case, i.e., $\delta=1$, each pair in $\mathcal{M}$ must access the swapping resource sequentially, yielding $D_{\mathrm{star}}(\mathcal{M}) = |\mathcal{M}|.$
% \begin{equation}
% D_{\mathrm{star}}(\mathcal{M}) = |\mathcal{M}|.
% \end{equation}
With $\delta$ parallel BSM devices the forwarding depth reduces to $D_{\mathrm{star}}(\mathcal{M}) = \lceil |\mathcal{M}|/\delta \rceil$. Since each BSM attempt succeeds independently with probability $p_{\mathrm{BSM}}$, the expected number of attempts per connection is $1/p_{\mathrm{BSM}}$, and the $|\mathcal{M}|$ connections are served one at a time. Thus, the expected forwarding latency scales linearly with the matching size:
\begin{equation}
\bar{T}_{\mathrm{star}}(\mathcal{M}) \propto \left\lceil \frac{|\mathcal{M}|}{\delta} \right\rceil
\frac{1}{p_{\mathrm{BSM}}},
\end{equation}
up to a constant factor associated with the duration of a single BSM attempt and the related classical feed-forward. In the single-device case, this reduces to $\bar{T}_{\mathrm{star}}(\mathcal{M})
\propto
\frac{|\mathcal{M}|}{p_{\mathrm{BSM}}}$.

\paragraph*{Memory-relay router}
In this architecture, each requested connection requires the successful generation of an entangled pair between the memory banks, followed by BSMs implementing entanglement swapping across the memories. In the single-device case, i.e., $\delta=1$, generation and swapping are performed sequentially for the pairs in $\mathcal{M}$, yielding a forwarding depth that grows linearly with the matching size: $D_{\mathrm{QR}}(\mathcal{M})=|\mathcal{M}|$. With  $\delta$ parallel internal service chains, the forwarding depth reduces to
\begin{equation}
D_{\mathrm{QR}}(\mathcal{M})
=
\left\lceil \frac{|\mathcal{M}|}{\delta} \right\rceil.
\end{equation}
Under the simplifying assumption that each connection requires one successful internal entanglement-generation event with probability $p_{\mathrm{gen}}$ and two successful BSMs, each with probability $p_{\mathrm{BSM}}$, the expected forwarding latency scales as:
\begin{equation}
\bar{T}_{\mathrm{QR}}(\mathcal{M})
\propto
\left\lceil \frac{|\mathcal{M}|}{\delta} \right\rceil
\frac{1}{p_{\mathrm{gen}}\,p_{\mathrm{BSM}}^{2}}.
\end{equation}
In the single-device case, this becomes $\bar{T}_{\mathrm{QR}}(\mathcal{M})=\frac{|\mathcal{M}|}{p_{\mathrm{gen}}\,p_{\mathrm{BSM}}^{2}}$.

\paragraph*{EC switching fabrics}
In both the monolithic EC and the Clos-type EC fabrics, forwarding is realized through local Pauli measurements on the switching qubits. Since distinct matched edges are associated with disjoint switching-qubit domains, the corresponding measurements commute and can be executed in parallel. Accordingly, let $N_{\mathrm{meas}}(\mathcal{M})\leq |S|$ denote the total number of Pauli measurements required to implement $\mathcal{M}$. With $\sigma$ measurement apparatuses available, the forwarding depth is $D_{\mathrm{EC}}(\mathcal{M})= \lceil N_{\mathrm{meas}}(\mathcal{M})/\sigma\rceil$. In the regime $\sigma \geq N_{\mathrm{meas}}(\mathcal{M})$, all measurements can be executed simultaneously, reducing the depth to a single operational round, and yielding:
\begin{equation}
D_{\mathrm{EC}}(\mathcal{M}) = 1, \quad \bar{T}_{\mathrm{EC}}(\mathcal{M}) = \tau_q + \tau_c,
\end{equation}
where $\tau_q$ is the duration of a single-qubit Pauli measurement and $\tau_c$ accounts for classical signaling and conditional local corrections.
The comparison reveals that the fundamental distinction is not whether the resource is bipartite or multipartite per se, but whether the architecture is universal for all admissible matchings or must react to the specific matching request. The fully bipartite architecture and the proposed EC fabrics belong to a \textit{matching-oblivious forwarding} regime: the provisioned resource is able to support any admissible matching before the matching itself is assigned. In the fully bipartite case, universality is achieved by storing one Bell pair for every possible input-output pair. In the EC case, universality is encoded structurally in a pre-generated multipartite graph-state fabric, and the matching only determines the local Pauli measurements to be performed. By contrast, the star architecture and the memory-relay router belong to a \textit{matching-driven forwarding} regime: the requested matching must be known before the architecture can select, steer, generate, or swap the required bipartite resources. This induces a per-connection service bottleneck and leads to matching-dependent forwarding depth.

\subsection{Resource Scaling}
\label{sec:5.2}
We now compare the resource required to instantiate each forwarding architecture. Specifically, we analyze the number of qubits that must be stored or made available to support forwarding.

\paragraph*{Star architecture}
A star architecture stores one qubit per port and relies on a shared BSM resource to realize the requested input-output connections. Its qubit requirement is linear, i.e., $O(N)$. This reduced requirement in qubit number comes at the cost of a matching-dependent latency: with a single BSM the requested connections are served sequentially, whereas with $\delta$ parallel BSM devices the depth reduces to $\lceil\frac{|\mathcal{M}|}{\delta}\rceil$, as discussed in Sec.~\ref{sec:5.1}. 

\paragraph*{Fully bipartite architecture}
A direct pre-sharing of Bell pairs between every input-output pair requires $N^2$ Bell pairs distributed over $2N$ input-output-ports memory banks for a total of $2N^2$ qubits. This quadratic provisioning makes the architecture matching-oblivious in terms of resource availability: once the Bell pairs are available, any admissible matching can be served with negligible forwarding latency by selecting the corresponding pre-shared pairs. However, this universality is obtained by full pairwise replication: each logical input-port is associated with $N$ distinct physical qubits, one for each possible output connection. Hence, the forwarding decision, i.e. the matching $\mathcal{M}$ to be served in a given round, selects not only the Bell pair to be consumed but also which physical qubit realizes the logical port for that connection. This couples port identity, memory lookup, and forwarding choice, a coupling that disappears in the EC-Crossbar.

\paragraph*{Monolithic EC crossbar}
For the monolithic EC $N$-crossbar, one switching qubit is dedicated to each input-output pair. Thus $|S|_{\mathrm{EC}} = N^2 $ and $ |V|= N^2+2N$. The total qubit count is therefore quadratic, as in the fully bipartite architecture. The two designs differ, however, in how they achieve universality over all admissible matchings. The fully bipartite architecture stores $N^2$ independent Bell pairs, one for each possible input-output connection. The EC crossbar encodes universality in a single multipartite graph-state fabric: the same structured resource supports all admissible matchings, and a specific matching is realized by choosing the appropriate local Pauli measurements on the switching qubits. Hence, the EC crossbar removes the coupling highlighted for fully bipartite architectures. Indeed, each logical port corresponds to a unique physical port-qubit, independently of the selected output. The matching determines which switching qubits are measured, not which physical qubit realizes the port.  This separation is architecturally important: the external port interface remains fixed, while forwarding decisions are implemented by configuring the internal degrees of freedom of the fabric. Thus, the EC crossbar replaces a replicated bank of pair-labeled port memories with a measurement-programmable switching substrate, providing a clear boundary between routing decisions, port interfacing, and internal forwarding operations. In other words, the fully bipartite architecture makes the physical realization of a port matching-dependent, whereas the EC crossbar makes only the internal fabric configuration matching-dependent.

\paragraph*{Modular EC fabric}
The total switching-qubit count required by the modular Clos EC fabric is obtained by summing the intra-module contributions and the internal port qubits $P$:
\begin{equation}
\label{eq:clos_s_count}
|S|_{\mathrm{Clos}} = \underbrace{\vphantom{\frac{m}{n^2}}2Nm + mr^2}_{\mathrm\sum_{B}|S_B|}
+\underbrace{\frac{4mN}{n}}_{\mathrm|P|}
=\frac{m}{n^2}N^2+\frac{2m(n+2)}{n}N
\end{equation}
where $r=N/n$, with graph-state size $|V|_{\mathrm{Clos}} = 2N+|S|_{\mathrm{Clos}}$.

The modular and monolithic EC fabrics have quadratic graph-state size. However, under the strictly non-blocking Clos condition $m=2n-1$, the leading coefficient of the Clos-type fabric is $\frac{m}{n^2}= \frac{2n-1}{n^2} < 1$ for $ n\geq 2$. Since both EC architectures envision $2N$ external port qubits, the crossover is equivalently obtained from $|S|_{\mathrm{Clos}}=|S|_{\mathrm{EC}}$, yielding:
\begin{equation}
N^{*}(n) = \frac{2(2n-1)n(n+2)}{(n-1)^2}.
\end{equation}
For $n=2,3,4,5$, this gives $N^*\approx 48,38,37,39$, respectively. Thus, for $N>N^*(n)$, the Clos-type EC fabric reduces the graph-state size with respect to the monolithic EC crossbar, as illustrated in Fig.~\ref{fig:S_scaling}. The corresponding asymptotic leading coefficients are $3/4,\,5/9,\,7/16,\,9/25$. Beyond the reduced qubit-count, the modularity of Clos-type fabric also provides a structural advantage: its largest individual EC module contains at most $\max\!\left\{n(2n-1),\,\frac{N^2}{n^2}\right\}$ intra-module switching qubits, instead of the $N^2$ switching qubits of the monolithic one. 
\begin{figure}[t]
\centering
\includegraphics[width=\columnwidth]{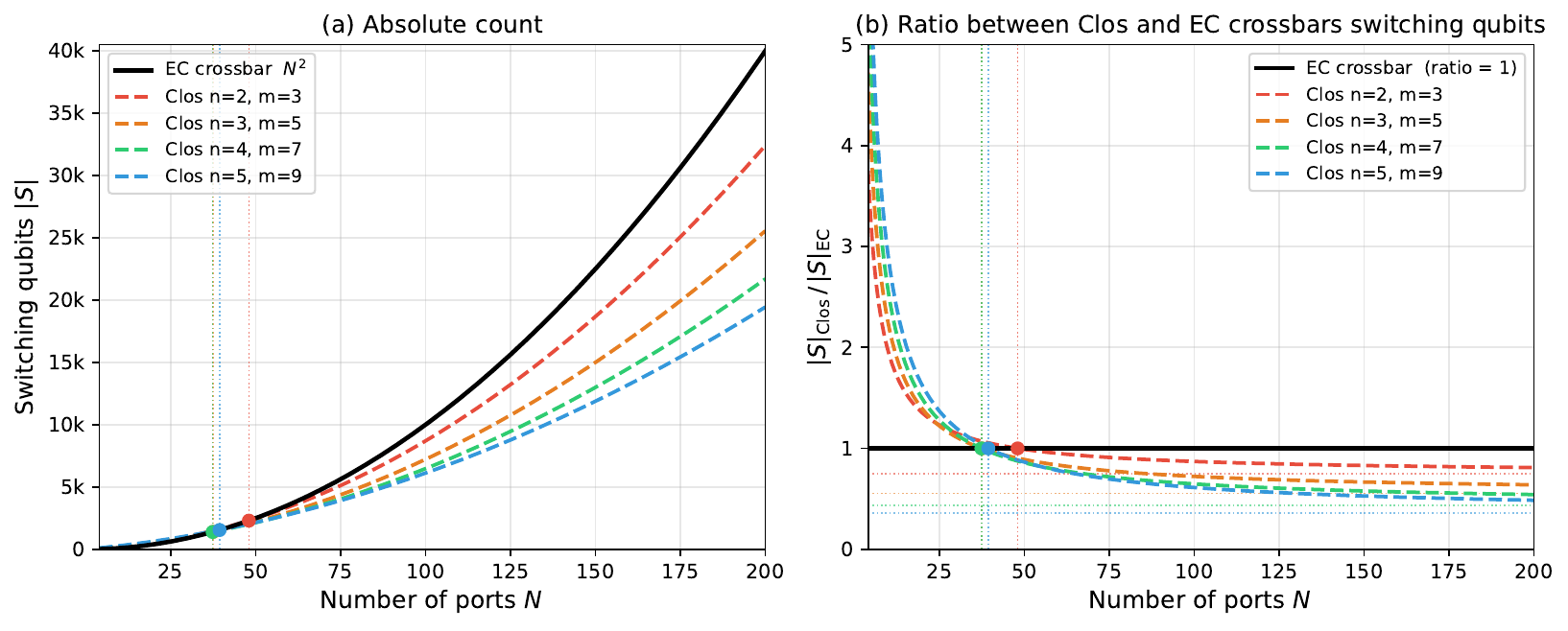}
\caption{\footnotesize Switching-qubit scaling vs the number of ports $N$. (a) Absolute count for the monolithic EC crossbar and Clos-type EC fabrics with $m=2n-1$; dots mark the crossover $N^*$ where the Clos-type count becomes smaller. (b) Ratio $|S|_{\mathrm{Clos}}/|S|_{\mathrm{EC}}$, dotted lines indicate the asymptotic factors $m/n^2$.}
\label{fig:S_scaling}
\hrulefill
\end{figure}

\subsection{Implementation considerations}
\label{sec:5.3}
The hardware required to realize entanglement forwarding differs substantially between bipartite and multipartite architectures, both in the nature of the forwarding primitive and in the role of the supporting devices.

In photonic implementations a BSM typically relies on two-photon interference followed by coincidence detection, i.e., a \textit{double-click} event. Under linear optics, the success probability is bounded by $p_{\mathrm{BSM}} \leq 1/2$~\cite{CasLut-01}. Recent experiments beyond standard linear-optical schemes have achieved higher success probabilities, up to $62.5\%$~\cite{BayDarVan-23}, at the price of increased experimental complexity and the use of ancillary qubits. Hence, in EPR-based architectures, forwarding depends on probabilistic primitives whose success probability directly affects the expected latency.

In these architectures, an internal switchboard is required to route qubits from the channels or memories to the BSM station. In the \textit{star architecture}, the switchboard collects photons associated with the requested input-output pairs and routes them to a shared BSM device. With a single BSM resource, the switchboard acts as a time-multiplexed scheduler: only one pair can be served at a time, leading to the matching-dependent latency characterized in Sec.~\ref{sec:5.1}. Parallel BSM devices reduce this serialization but do not remove the dependence on the number of requested pairs.
In the \textit{memory-relay router}, the switchboard has an additional role. Besides routing memory qubits to the measurement stations, it must coordinate the internal entanglement resources required to complete the requested input-output connections. Under the abstraction used in Sec.~\ref{sec:5.1}, each connection requires one successful internal entanglement-generation event with probability $p_{\mathrm{gen}}$ and two successful BSMs, each with probability $p_{\mathrm{BSM}}$. This compounds the probabilistic nature of the forwarding process and yields the latency scaling reported in Tab.~\ref{tab:latency_scaling}.

The EC crossbar replaces the BSM-based forwarding pipeline with a measurement-driven paradigm closely related to \textit{measurement-based quantum computing}~(MBQC)~\cite{RauBri-01}. The forwarding operation is not realized by routing qubits to shared BSM stations, but by choosing the measurement pattern applied to a pre-generated multipartite fabric. Hence, the primary implementation cost is shifted to the generation of $\ket{G}$. Note that entangled states of this class can be produced by deterministic protocols based on sequential photon emission from quantum emitters~\cite{SchCogScm-16,RubMasYel-25}, including one- and two-dimensional cluster states with high fidelity and without post-selection. Furthermore, the pro-active generation represents a widely recognized approach, as remarked in \cite{AbaCubMai-25,CheCacCal-25}. Once $\ket{G}$ is available, forwarding reduces to local single-qubit Pauli measurements, implemented through basis selection followed by detection. This primitive is structurally different and simpler than a BSM: it does not require two-photon interference between the qubits being connected, and it can be parallelized across the switching qubits involved in the selected matching. Overall, BSM-based designs require an active switchboard that routes qubits to shared measurement resources and, in memory-relay architectures, coordinates internal entanglement generation and swapping. Their forwarding latency is therefore tied to the number of requested connections and to the success probabilities of the underlying probabilistic primitives. EC fabrics, instead, move the dominant hardware cost to the preparation of the graph-state resource. Once this resource is available, the forwarding stage is reduced to parallel local measurements, achieving depth-one operation. Thus, the advantage of the EC crossbar lies not only in forwarding latency, but also in the architectural simplification of the active forwarding hardware: the role of the switchboard is replaced by a measurement-programmable entanglement fabric that decouples port interfacing from internal forwarding logic. This advantage is not captured by the asymptotic qubit count alone: it concerns the control abstraction exposed by the forwarding substrate. The fully bipartite design exposes pair-labelled Bell-pair availability to the forwarding controller, whereas the EC fabric exposes a stable port interface and confines matching-dependent decisions to internal measurement choices.
%-----------------------------------
%---------Sec. VI. -----------------
%-----------------------------------

\section{Conclusions}
\label{sec:6}

This paper introduced a switching-fabric framework for quantum routers based on multipartite entanglement. By translating the classical notion of non-blocking operation into the quantum setting, we derived the edge-controlled (EC) design principle and instantiated it through two architectures: a monolithic EC crossbar and a modular Clos-type EC fabric. The former provides a direct non-blocking construction, while the latter imports the modular decomposition of classical Clos networks to reduce the switching-qubit overhead in the appropriate parameter regime. The proposed fabrics are \textit{matching-oblivious}: their forwarding resources are provisioned to support all admissible matchings before the matching itself is assigned. In EC fabrics, this universality is encoded structurally in a multipartite graph-state resource.  Moreover, EC fabrics decouple forwarding decisions from port interfacing: each logical port is represented by a stable physical port qubit, while matching-dependent operations are confined to the internal switching degrees of freedom of the fabric. This decoupling provides a clean architectural boundary between routing decisions, port interfaces, and forwarding operations. The framework is hardware-agnostic and should be understood as an architectural abstraction rather than as a device-specific implementation. Its main cost is the generation and maintenance of the multipartite resource state, while the forwarding stage is reduced to local single-qubit measurements. In turn, the separation between resource preparation and forwarding logic enables the switching-fabric architecture to be studied independently of the physical mechanisms used to generate the underlying graph state.

%--------------------------------------------------------------
% Appendices
%--------------------------------------------------------------
\begin{appendices}
\section{Proof of Lemma~\ref{lem:01}}
\label{app:l}

Let $G'=(V',E')$, with $|V'|$ \textit{active} input-output-ports, be the achievable graph. Since $|E'|<N$, at least one input and one output-port remain \textit{idle} under the target configuration $G'$, i.e., $V\setminus V'\neq \emptyset$.
By hypothesis, for every measurement strategy $\gamma(\cdot)$ that achieves $G'$, there exists at least one port $i_p \in \Gamma(G') \cap (V\setminus V')$
(or equivalently $o_p$, since the argument applies symmetrically for $i_p\in I$ or $o_p\in O$). Hence, although $i_p$ is an idle port, it is nonetheless measured, and thus depleted, in the process of realizing $G'$, being removed from the original graph. This, in turn, implies that $i_p$ can no longer be used to establish additional entanglement-based connections by local operations and classical communication (LOCC). 
Let $o_q$ denote an idle port on the opposite side of the switching fabric (if $i_p\in I$, then $o_q\in O\setminus V'$ and vice-versa). The pair $(i_p,o_q)$ constitutes an \textit{idle} input-output-port pair that cannot be subsequently connected, since $i_p$ is no longer available. This directly violates the non-blocking criterion, which requires that any idle input can be connected to any idle output without disturbing existing connections.
Hence, the switching-fabric resource $\ket{G}$ is \textit{blocking}. 

\section{Proof of Proposition~\ref{prop:01}}
\label{app:01}
According to Design Principle~\ref{def:06}, an entanglement-based crossbar is non-blocking if for any matching, there exists a measurement strategy resulting in \eqref{eq:07} and the residual state is still a non-blocking crossbar resource. The proof proceeds in steps: consider the matching associated with $E'\subset I\times O$ as introduced above, together with the three EC-induced subsets of switching qubits
$S_{\mathrm{idle}}$, $ S_{\mathrm{com}}$, and $S_{\mathrm{med}}$.\\
\noindent\textbf{Step 1 (Per-edge strategy and port safety).} Let us consider the following measurement strategy:
\begin{equation}
\label{eq:10}
    \gamma(s_{i,o})= \begin{cases}\sigma_y & s_{i,o} \in S_{\mathrm{med}}\\
    \sigma_z & s_{i,o} \in  S_{\mathrm{com}}\\
    \end{cases}
\end{equation}
According to this strategy, for each $(i,o)\in E'$, a Pauli-$Y$ is performed on the corresponding mediator $s_{i,o}\in S_{\mathrm{med}}$, while Pauli-$Z$ measurements are applied to the competitors $S_{\mathrm{com}}$, and nothing is done on the other switching qubits. 
A Pauli-$Y$ measurement on $s_{i,o}$ enables (up to local unitaries on the neighbors) a direct edge between its adjacent vertices $i$ and $o$, i.e.,  $(i,o)$, and removes $s_{i,o}$ from the graph~\cite{HeiEisBri-04,HeiDurEis-06,MazCalCac-24-1,CheIllCac-24}. Similarly, the Pauli-$Z$ measurements delete all the competitor switching qubits and all their incident edges, without altering the rest of the graph. 
We note that each switching qubit $s_{i,o}$ is uniquely associated with its pair $(i,o)$, thus the measurements specified by $\gamma(\cdot)$ act on pairwise disjoint sets of switching qubits, i.e., on $S_{\mathrm{med}} \cup S_{\mathrm{com}}$, with $S_{\mathrm{med}}\cap  S_{\mathrm{com}}=\emptyset$. This in turn implies that port qubits are never measured, that is, \textit{ports are never depleted}:
\begin{equation}
\label{eq:15bis}
    \Gamma(G')=S_{\mathrm{med}}\cup S_{\mathrm{com}} \subseteq S\,\, \wedge\, \Gamma(G')\cap (I\cup O)= \emptyset.
\end{equation}
\noindent\textbf{Step 2 (Commutation and independence).}
The EC structure ensures that all switching qubits in $S_{\mathrm{med}} \cup S_{\mathrm{com}}$ are distinct. Thus, the corresponding single-qubit Pauli operators commute, since they act on disjoint subsets of qubits. 
Hence, the measurement strategy $\gamma(\cdot)$, performed on $S_{\mathrm{med}} \cup S_{\mathrm{com}}$, yields the same post-measurement state, regardless of order:
\begin{equation}
\label{eq:16}
\mathcal P^V_{\gamma}\ket{G_{\mathrm{EC}}}
\;\overset{\mathrm{LU}}{=}\;
\ket{G'}\otimes \ket{\phi_{\mathrm{idle}}},
\end{equation}
where $\ket{G'}=\bigotimes_{(i,o)\in E'} \ket{\Phi_{io}}$ is the achieved graph with edge set $E'$, and $\ket{\phi_{\mathrm{idle}}}$ is the post-measurement state on the remaining subsystem $V\setminus (V'\cup\Gamma(G'))$ with $\Gamma(G')=S_{\mathrm{med}} \cup S_{\mathrm{com}}$. \\
\noindent\textbf{Step 3 (Structure of the residual and recursion).}
As proved above, the depleted set is $\Gamma(G')$ as expressed in \eqref{eq:15bis}.
% \begin{equation}
%     \Gamma(G)=S_{\text{med}}(E')\cup S_{\mathcal C}(E') \subseteq S
% \end{equation}
% and no port qubit is measured. 
Therefore, the residual subsystem is:
\begin{align} 
V_{\mathrm{idle}}\;=V\;\setminus (V'\cup \Gamma(G'))& \;=\; \\
\Big((I\setminus I') \cup (O\setminus O')\Big)\ \cup\ \Big( S \setminus \big (S_{\mathrm{com}}&\cup S_{\mathrm{med}}\big)\Big).\nonumber
\end{align}

For any two \textit{idle} ports $i\in I\setminus I'$ and $o\in O\setminus O'$, the switching qubit $s_{i,o}\in S_{\mathrm{idle}}$ remains present and adjacent only to \textit{both} endpoints, as it is unaffected by the measurement strategy $\gamma(\cdot)$.  Hence, the EC structure is preserved among the idle ports. In particular, the residual subsystem $\ket{\phi_{\mathrm{idle}}}$ is, up to local unitaries, equivalent to $\ket{G''_{EC}}$, with vertex set $(I\setminus I')\,\cup \,(O\setminus O')\,\cup \,(S\setminus \Gamma(G'))$ and edge set $\{(i,s_{i,o}),(s_{i,o},o)\;|\; i \notin I' \wedge o \notin O'\}$, which reproduces the $(N-|E'|)$-EC crossbar structure in \eqref{eq:EC}-\eqref{eq:ec-3}. Overall, \eqref{eq:16} can be re-written as:
\begin{equation}
    \mathcal P^V_{\gamma}\ket{G_{\mathrm{EC}}}
\;\overset{\mathrm{LU}}{=}\;
\bigotimes_{(i,o)\in E'} \ket{\Phi_{io}}
\ \otimes\
\ket{G''_{\mathrm{EC}}}.
\end{equation}

\noindent\textbf{Step 4 (Non-blocking property).}
By repeating Steps~1–3 on $\ket{G''_{\mathrm{EC}}}$ for any further matching among the idle ports, the same construction applies inductively. 
Therefore the residual $\ket{G''_{\mathrm{EC}}}$ is itself a non-blocking $(N-|E'|)$-crossbar in the sense of Design Principle~\ref{def:06}.
This completes the proof by establishing that $\ket{G_{\mathrm{EC}}}$ is a non-blocking monolithic EC crossbar, and therefore a valid non-blocking switching-fabric resource.

\section{Proof of Proposition~\ref{cor:02}}
\label{app:03}

We fix a matching $\mathcal{M}\subset I\times O$ with $|\mathcal{M}|<N$ and prove Design Principle~\ref{def:06}.%:

\textit{Middle-stage assignment under the SNB condition.}
Under the strictly non-blocking Clos condition $m\ge 2n-1$, it is always possible to assign, for each $(i,o)\in\mathcal{M}$, a middle-stage module that is not already used by any connection incident to the same ingress- or egress-stage module. This follows from the definition of SNB Clos architecture \cite{Clos-53}.
Accordingly, there exists an assignment $\Lambda$ defined on every pair $(i,o)\in \mathcal{M}$ such that the induced internal paths do not collide on inter-stage port qubits.
Fix such an assignment $\Lambda$ and consider the associated assignment-induced local matchings $E'_B$ with the corresponding partition of the fabric qubits into
$S_{\mathrm{med}}$,
$S_{\mathrm{com}}$,
and
$S_{\mathrm{idle}}$,
as defined in Sec.~\ref{sec:4.1}.

\textit{The Clos-type EC fabric satisfies Design Principle ~\ref{def:06}}
Consider the measurement strategy \eqref{eq:10}, with $S_{\mathrm{med}}$,
$S_{\mathrm{com}}$, defined as in Sec.~\ref{sec:4.1}.
% Define the measurement strategy $\gamma(\mathcal{M})$ as follows:
% \begin{equation}
% \label{eq:clos_strategy}
%     \gamma(v)= \begin{cases}
%         \sigma_y & v \in S_{\mathrm{med}} \\
%         \sigma_z & v \in S_{\mathrm{com}} 
%     \end{cases}
% \end{equation}
By construction, $\gamma(\cdot)$ acts exclusively on qubits in $S$ and therefore $\Gamma(\mathcal{M})\subset S$ $\Gamma(\mathcal{M})\cap (I\cup O)=\emptyset$.

By the assignment-induced partition of the modular EC fabric as in Sec.~\ref{sec:4.1}, the switching-qubit set is decomposed as $ S= S_{\mathrm{med}} \cup S_{\mathrm{com}}  \cup S_{\mathrm{idle}} , $ with $
    S_{\mathrm{med}}\cap S_{\mathrm{com}}=S_{\mathrm{com}}\cap S_{\mathrm{idle}}= S_{\mathrm{med}} \cap S_{\mathrm{idle}}= \emptyset .$
% \begin{align}
%     S= S_{\mathrm{med}} \cup &S_{\mathrm{com}}  \cup S_{\mathrm{idle}} , \nonumber\\
%     S_{\mathrm{med}}\cap S_{\mathrm{com}}=S_{\mathrm{com}}\cap &S_{\mathrm{idle}}= S_{\mathrm{med}} \cap S_{\mathrm{idle}}= \emptyset .
% \end{align}
Since single-qubit Pauli operators acting on different qubits commute, the global operator induced by $\gamma(\cdot)$ in \eqref{eq:10} factorizes as follows:
\begin{align}
    \mathcal{P}^V_\gamma= \big( \bigotimes_{v \in S_{\mathrm{com}}} \sigma_z^{(v)}\big)\big( \bigotimes_{v \in S_{\mathrm{med}}} \sigma_y^{(v)}\big)
\end{align}
and the post-measurement state depends on the set of measured qubits, not on the order in which measurements are performed. 
%Equivalently, all Pauli-$Y$ and Pauli-$Z$ measurements can be carried out in any order or simultaneously.

%\paragraph*{Proof of Requirement~I}
Consider an arbitrary pair $(i,o)\in\mathcal{M}$ and its selected internal path under $\Lambda$.
For each module traversed by the path, all switching qubits incident to the corresponding active ports except the one selected by the local matching belong to $S_{\mathrm{com}}$ and are measured in the $Z$ basis.
As a result, all alternative internal links, i.e. intra-module connections to active internal ports, incident to active ports are removed. The remaining mediator switching qubits inside the EC modules are measured in the $Y$ basis. As in the monolithic case, each such measurement consumes a degree-two switching qubit and, up to local Pauli corrections, induces an effective edge between the corresponding pair of ports selected by the local matching.

Due to the $Z$-measurements, each inter-stage port qubit selected by the assignment is left adjacent only to the two qubits belonging to the routed path. Hence, these inter-stage port qubits are degree-two vertices in the reduced graph. Measuring them in the $Y$ basis thus contracts the routed internal path and, up to local unitaries, yields a Bell pair $\ket{\Phi_{io}}$ between the external input $i$ and the external output $o$.

We observe that the classical SNB routing guarantees that the sequence of qubits measured to connect two pairs of the matching are disjoint. Furthermore, after competitor removal, the neighborhood of the mediator qubits is also disjoint. Thus, the Pauli-$Y$ measurement associated with a given pair does not affect the graph transformation induced by another pair of the matching. Repeating the competitor removal and mediator measurement for all $(i,o)\in\mathcal{M}$, we obtain $\mathcal{P}^V_{\gamma}\ket{G} \;\overset{\mathrm{LU}}{=}\;\Bigl(\bigotimes_{(i,o)\in\mathcal{M}}\ket{\Phi_{io}}\Bigr)\otimes \ket{\phi_{idle}},$
% \begin{equation}
%     \mathcal{P}^V_{\gamma}\ket{G}
% \;\overset{\mathrm{LU}}{=}\;
% \Bigl(\bigotimes_{(i,o)\in\mathcal{M}}\ket{\Phi_{io}}\Bigr)\otimes \ket{\phi_{idle}},
% \end{equation}
where $\ket{\phi_{idle}}$ denotes the post-measurement state on the remaining subsystem $V\setminus (V'\cup \Gamma(\mathcal{M}))$.
This proves that the strategy $\gamma(\cdot)$ on $\ket{G}$ can achieve the matching $\mathcal{M}$.

We observe that if $|\mathcal{M}|< N$, $V_{\mathrm{idle}}\neq \emptyset$ and we can explicit the connection pattern among the residual qubits as follows:
\begin{align}
\label{eq:resiudal_graph-1}
    V'' &= V_{\mathrm{idle}}=(I\setminus I')\cup(O\setminus O')\cup S_{\textrm{idle}}\\S_{\textrm{idle}}&=P_{\mathrm{idle}}\cup \bigcup_{B\in W} S^{(B)}_{\mathrm{idle}} \\
E'' &= E''_w \cup \big(\bigcup_{B \in W }E''_B\big)
\\
E''_B &= \{(i,s_{i,o}),(s_{i,o},o)\;|\; i \notin I'_B \wedge o \notin O'_B\} \label{eq:64}\\
E''_w &= \bigl\{(p,q)\in E^{LC}\cup E^{CR}
        \mid p,q\in P_{\mathrm{idle}}\bigr\}
%E''_w= \{(o_{\ell,j}, i_{j,\ell}): \ell \in \mathcal{L}\setminus \mathcal{L_A}, j \in \mathcal{C}\setminus \mathcal{C_A} \}  \cup \nonumber\\
%\{(o_{j,\rho}, i_{\rho,j}): j \in \mathcal{C}\setminus \mathcal{C_A}, \rho \in \mathcal{R}\setminus \mathcal{R_A}, \}
\label{eq:resiudal_graph-3}
\end{align}
where $E''_B$ denotes the set of edges within EC modules and $E''_w$ denotes the set of inter-stage edges. 
By Clos construction and $m\geq 2n-1$, if $|\mathcal{M}|<N$ then $E''_w\neq \emptyset$. Indeed, if $|\mathcal{M}|<N$, then $V_{\mathrm{idle}} \neq \emptyset $, and the condition $m\geq 2n-1$ guarantees that there exists at least one middle stage able to connect idle inputs with idle outputs. That is, $P_{\mathrm{idle}}\neq \emptyset$ and hence $E''_w \neq \emptyset$. 
Hence, $\ket{\phi_{idle}}$ is a graph state $\ket{G''}$ corresponding to the graph $G''=(V'', E'')$, where $V''$ and $E''$ are described in \eqref{eq:resiudal_graph-1}-\eqref{eq:resiudal_graph-3}. Thus, the proof follows.

\end{appendices}

\bibliography{biblio}
\bibliographystyle{ieeetr}

\end{document}